\theoremstyle{definition}
\newtheorem{definition}{Definition}
\theoremstyle{plain} 
\newtheorem{theorem}{Theorem}
\newtheorem{lemma}{Lemma}
\newcommand{\email}[1]{{\href{mailto:#1}{\nolinkurl{#1}}}}
\def\Pc{\mathcal{P}}
\def\Rb{\mathbb{R}}
\newcommand{\du}{\mathop{}\!\mathrm{d}u}
\renewcommand{\bar}[1]{\mkern 1.5mu\overline{\mkern-1.5mu#1\mkern-1.5mu}\mkern 1.5mu}
\renewcommand{\hat}{\widehat}
\renewcommand{\epsilon}{\varepsilon}
\title{Sensitivity of Wardrop Equilibria: Revisited}
\author{Mahdi Takalloo}
\author{Changhyun Kwon\thanks{Corresponding author: \texttt{chkwon@usf.edu}}}
\affil{Department of Industrial and Management Systems Engineering, University of South Florida}
\date{February 5, 2020}
\begin{document}
\maketitle

\begin{abstract}
For \emph{single-commodity} networks, the increase of the price of anarchy is bounded by a factor of $(1+\epsilon)^p$ from above, when the travel demand is increased by a factor of $1+\epsilon$ and the latency functions are polynomials of degree at most $p$. We show that the same upper bound holds for \emph{multi-commodity} networks and provide a lower bound as well.
\\[1em]
\noindent\textbf{Keywords} Wardrop equilibria; Selfish routing; Price of Anarchy; Sensitivity analysis
\end{abstract}

\section{Introduction and Notation}

We study Wardrop's traffic equilibria \citep{wardrop1952some} and how the price of anarchy changes with demand increases.
Wardrop's traffic equilibria is an example of nonatomic congestion games.
Nonatomic games \citep{schmeidler1973equilibrium} involve a continuum of players and congestion games \citep{rosenthal1973class} are a class of noncooperative Nash games where the utility of each player is a function of the number of total players who choose the same or overlapping strategies. 
The price of anarchy measures the inefficiency of equilibria \citep{koutsoupias1999worst,papadimitriou2001algorithms}
by comparing the worst-case social cost of equilibria to the social cost of the system optimal solution. 
The price of anarchy for nonatomic congestion games have been well studied in the literature \citep{roughgarden2002bad, roughgarden2005selfish, correa2008geometric}.

For \emph{single-commodity} networks, \citet{englert2010sensitivity} have provided the upper bound on the change of the price of anarchy when the demand increases.
A commodity is the travel demand for an origin-destination (O-D) pair and we assume that there is only one commodity for each O-D pair. 
In this paper, we show that the same upper bound is also valid for \emph{multi-commodity} networks.

We also provide a lower bound on the change of the price of anarchy when the demand increases for multi-commodity networks.
We utilize a classical sensitivity analysis approach, which has not been well recognized in the price of anarchy literature. 
By making connections between classical and modern approaches, 
we derive both upper and lower bounds on the changes of the price of anarchy, which were not straightforward using the methods available in the literature.

\section{Preliminaries}

For a given directed graph $G=(V,E)$, we consider non-decreasing latency functions $\ell_e : \Rb_{\geq 0} \mapsto \Rb_{\geq 0}$ for each edge $e\in E$.
For each commodity $i\in[k]=\{1,2,...,k\}$, the flow demand is $d_i$.
We let $\Pc_i$ denote the available paths for commodity $i$ and $\Pc = \cup_{i\in[k]} \Pc_i$.
Note that $\Pc_i \cap \Pc_j = \emptyset$ for any commodities $i \neq j$.
Let $(G,(d_i),\ell)$ denote an instance of Wardrop equilibrium problems.

A feasible path flow vector $f$ is feasible when $\sum_{P\in\Pc_i} f_P = d_i$ for all $i\in[k]$ and $f_p \geq 0$ for all $p\in\Pc$.
A path flow vector $f$ can also be written for each edge $e$, such that $f_e = \sum_{i\in[k]} \sum_{P\in\Pc_i:e\in P} f_P$.
The path latency is defined as $\ell_P(f) = \sum_{e\in P} \ell_e(f_e)$.
The total cost is defined as $C(f) = \sum_{P\in\Pc} \ell_P(f)f_P = \sum_{e\in E} \ell_e(f_e)f_e$. 
An optimal flow $f$ minimizes $C(f)$.

For each commodity $i \in [k]$, we define
\[
	\mu_i(f) = \min_{P\in\Pc_i} \ell_P(f_P).
\]
We consider the latency function for each edge $e\in E$ of the following form:
\[
	\ell_e(f_e) = \sum_{m=0}^p {b}_{em}  f_e ^m 
\]
for constants ${b}_{em}\geq 0$ for all $e\in E$ and $m=0,1,...,p$.

\begin{definition}
A feasible flow vector $f$ is at Wardrop equilibrium if
\[
	f_P > 0 \implies \ell_P(f) = \mu_i(f)
\]
for all $P\in\Pc_i$ and $i\in[k]$.
\end{definition}

It is well-known \citep{smith1979existence, dafermos1980traffic} that at any equilibrium flow $f$ for instance $(G, (d_i), \ell)$, we have
\begin{equation}\label{VIP}
	\sum_{e\in E} \ell_e(f_e) ( \bar{f}_e - f_e ) \geq 0 
\end{equation}
for all feasible $\bar{f}$ for instance $(G, (d_i), \ell)$.

Considering the demand increase, \citet{roughgarden2005selfish} derived the following bound for general non-decreasing latency functions:

\begin{theorem}[\citealp{roughgarden2005selfish}]
\label{thm:roughgarden}
Let $\ell_e(\cdot)$ be a non-decreasing function for all $e\in E$.
Let $C'_{\mathrm{opt}}$ be the cost of an optimal flow for instance $(G, ((1+\epsilon)d_i), \ell)$ and let $f$ be equilibrium flow for instance $(G,(d_i),\ell)$. 
\begin{equation} \label{vi}
	C(f) \leq \frac{1}{\epsilon} C'_{\mathrm{opt}}.
\end{equation}
\end{theorem}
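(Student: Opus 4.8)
The plan is to compare the equilibrium flow $f$ for the original instance with a uniformly scaled-down copy of an optimal flow for the enlarged instance, and then exploit the variational inequality \eqref{VIP} together with the monotonicity of the latency functions. First I would let $f^{*}$ be an optimal flow for $(G, ((1+\epsilon)d_i), \ell)$, so that $C(f^{*}) = C'_{\mathrm{opt}}$, and set $\bar f = \tfrac{1}{1+\epsilon} f^{*}$. Since $\sum_{P\in\Pc_i} \bar f_P = \tfrac{1}{1+\epsilon}\sum_{P\in\Pc_i} f^{*}_P = d_i$ for every $i\in[k]$ and $\bar f\ge 0$, the flow $\bar f$ is feasible for $(G,(d_i),\ell)$, and $\bar f_e = \tfrac{1}{1+\epsilon} f^{*}_e$ on every edge. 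Substituting $\bar f$ into \eqref{VIP} gives
\[
C(f) = \sum_{e\in E}\ell_e(f_e) f_e \;\le\; \sum_{e\in E}\ell_e(f_e)\bar f_e \;=\; \frac{1}{1+\epsilon}\sum_{e\in E}\ell_e(f_e) f^{*}_e .
\]

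The key step is to bound $\sum_{e\in E} \ell_e(f_e) f^{*}_e$ in terms of $C(f)$ and $C'_{\mathrm{opt}}$. For this I would prove the edge-wise inequality $\ell_e(f_e) f^{*}_e \le \ell_e(f_e) f_e + \ell_e(f^{*}_e) f^{*}_e$ for every $e\in E$ by a two-case argument: if $f^{*}_e \le f_e$, then $\ell_e(f_e) f^{*}_e \le \ell_e(f_e) f_e$ because $\ell_e(f_e)\ge 0$; if instead $f^{*}_e > f_e$, then $\ell_e(f_e)\le \ell_e(f^{*}_e)$ since $\ell_e$ is non-decreasing, hence $\ell_e(f_e) f^{*}_e \le \ell_e(f^{*}_e) f^{*}_e$. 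Summing over $e$ yields $\sum_{e\in E}\ell_e(f_e) f^{*}_e \le C(f) + C(f^{*}) = C(f) + C'_{\mathrm{opt}}$.

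Combining the two estimates gives $C(f) \le \tfrac{1}{1+\epsilon}\bigl(C(f) + C'_{\mathrm{opt}}\bigr)$, and rearranging produces $\epsilon\, C(f) \le C'_{\mathrm{opt}}$, which is \eqref{vi}. The only mildly delicate point is the edge-wise inequality, but it reduces to the elementary comparison above and uses nothing beyond nonnegativity and monotonicity of $\ell_e$; everything else is bookkeeping with \eqref{VIP} and the feasibility of the scaled flow $\bar f$, so no substantial obstacle arises. (This is essentially Roughgarden's argument, reproduced here in the multi-commodity notation for completeness.)
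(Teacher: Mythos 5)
Your proof is correct: the scaled-down optimal flow $\bar f = f^{*}/(1+\epsilon)$ is feasible for the original instance, the variational inequality \eqref{VIP} gives $C(f) \le \frac{1}{1+\epsilon}\sum_{e\in E}\ell_e(f_e) f^{*}_e$, and your two-case edge-wise bound $\ell_e(f_e) f^{*}_e \le \ell_e(f_e) f_e + \ell_e(f^{*}_e) f^{*}_e$ is valid for nonnegative non-decreasing latencies, so rearranging indeed yields $\epsilon\, C(f) \le C'_{\mathrm{opt}}$. The paper cites this theorem without reproducing a proof, but your argument is the standard one and follows exactly the template the paper itself uses for Theorem \ref{thm:chris_bound}, where Lemma \ref{lem:chris} plays the role of (and sharpens, for polynomial latencies) your elementary edge-wise inequality.
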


If we assume polynomial functions for the latency, we can improve the bound in Theorem \ref{thm:roughgarden} using the following lemma:

\begin{lemma}[\citealp{christodoulou2011performance}] \label{lem:chris}
Let $\ell_e(\cdot)$ be a polynomial with nonnegative coefficients of degree $p$ for all $e\in E$. The inequality 
\[
  \ell_e(f_e) f'_e \leq \frac{p}{(p+1)^{1+1/p}} \ell_e(f_e) f_e +  \ell_e(f'_e) f'_e
\]
holds for any $f_e \geq 0 $ and $f'_e \geq 0$ for all $e\in E$.
\end{lemma}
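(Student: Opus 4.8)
The plan is to exploit that $\ell_e$ has nonnegative coefficients, so the claimed inequality is additive over monomials, and then to settle it for a single power function by elementary one‑variable calculus. Write $\ell_e(z)=\sum_{m=0}^p b_{em}z^m$ with $b_{em}\ge 0$. Then $\ell_e(f_e)f'_e=\sum_m b_{em}f_e^{m}f'_e$, while $\ell_e(f_e)f_e=\sum_m b_{em}f_e^{m+1}$ and $\ell_e(f'_e)f'_e=\sum_m b_{em}(f'_e)^{m+1}$. Hence it suffices to prove, for every integer $0\le m\le p$ and all $x,y\ge 0$,
\[
  x^m y \;\le\; \frac{p}{(p+1)^{1+1/p}}\,x^{m+1} + y^{m+1},
\]
and then multiply by $b_{em}\ge 0$ and sum over $m$ to recover the statement.

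For the monomial inequality I would first identify the sharp constant attached to the exponent $m$ itself. The cases $x=0$ (left side is $0$ when $m\ge 1$, or $y\le \lambda\cdot 0+y$ when $m=0$) are immediate, so assume $x>0$ and $m\ge 1$, divide through by $x^{m+1}$, and set $t=y/x\ge 0$; the inequality becomes $t-t^{m+1}\le \lambda$. Maximizing $g(t)=t-t^{m+1}$ over $t\ge 0$ gives the unique interior critical point $t_\star=(m+1)^{-1/m}$, with $g(t_\star)=\frac{m}{(m+1)^{1+1/m}}$. This yields the per‑monomial bound $x^m y\le \frac{m}{(m+1)^{1+1/m}}x^{m+1}+y^{m+1}$, i.e. the asserted inequality with the exponent‑dependent constant $\frac{m}{(m+1)^{1+1/m}}$ in place of $\frac{p}{(p+1)^{1+1/p}}$. (If one prefers to bypass calculus, the same bound follows from weighted AM–GM applied to $x^m\cdot y$ with weights $\tfrac{m}{m+1}$ and $\tfrac1{m+1}$ after rescaling $x\mapsto (m+1)^{-1/m}x$.)

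It then remains to show that $h(m):=\frac{m}{(m+1)^{1+1/m}}$ is non‑decreasing in $m$, so that $h(m)\le h(p)$ for $m\le p$ and the per‑monomial constants may all be replaced by the uniform constant $\frac{p}{(p+1)^{1+1/p}}$. Treating $m$ as a real variable, $\log h(m)=\log m-\bigl(1+\tfrac1m\bigr)\log(m+1)$, and a short computation gives $\frac{d}{dm}\log h(m)=\frac{\log(m+1)}{m^2}>0$ for $m>0$. Combining the monomial bound with this monotonicity and summing the weighted inequalities over $m=0,\dots,p$ finishes the argument.

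I do not expect a genuine obstacle here: the lemma is a Young‑type inequality, and the only points requiring care are (i) isolating the degenerate cases $x=0$ and $m=0$ before dividing, and (ii) verifying the monotonicity of $h$, which the logarithmic‑derivative identity above settles cleanly. The one mildly technical computation is that derivative, but it is elementary.
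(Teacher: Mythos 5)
Your proof is correct. Note that the paper itself gives no proof of this lemma --- it is imported verbatim from \citet{christodoulou2011performance} --- so there is nothing in the text to compare against; your argument (monomial decomposition, the Young-type bound $x^m y \le \tfrac{m}{(m+1)^{1+1/m}}x^{m+1}+y^{m+1}$ obtained by maximizing $t-t^{m+1}$, and the monotonicity of $m\mapsto \tfrac{m}{(m+1)^{1+1/m}}$ via the logarithmic derivative $\tfrac{\log(m+1)}{m^2}>0$) is essentially the standard proof from that reference and all steps, including the degenerate cases $m=0$ and $x=0$, check out.
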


The improved bound follows:

\begin{theorem} \label{thm:chris_bound}
Let $C'_{\mathrm{opt}}$ be the cost of an optimal flow for instance $(G, ((1+\epsilon)d_i), \ell)$ and let $f$ be equilibrium flow for instance $(G,(d_i),\ell)$. 
For both instances, suppose polynomial latency functions of degree at most $p$ with nonnegative coefficients.
We have
\begin{equation}
	C(f) \leq \Bigg[ (1+\epsilon) - \frac{p}{(p+1)^{1+1/p}} \Bigg]^{-1} C'_{\mathrm{opt}}.
\end{equation}
for all $\epsilon \geq 0 $.
\end{theorem}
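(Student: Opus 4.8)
The plan is to combine the variational inequality \eqref{VIP} with the ``smoothness''-type bound of Lemma~\ref{lem:chris}, using a scaled copy of the optimal flow for the enlarged instance as the test flow. Let $f^*$ be an optimal flow for $(G,((1+\epsilon)d_i),\ell)$, so $C(f^*) = C'_{\mathrm{opt}}$, and set $\bar f = f^*/(1+\epsilon)$. First I would check that $\bar f$ is feasible for the original instance $(G,(d_i),\ell)$: for each commodity $i$ we have $\sum_{P\in\Pc_i}\bar f_P = \frac{1}{1+\epsilon}\sum_{P\in\Pc_i} f^*_P = \frac{1}{1+\epsilon}(1+\epsilon)d_i = d_i$, and nonnegativity is immediate; moreover $\bar f_e = f^*_e/(1+\epsilon)$ edgewise.

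Next, since $f$ is an equilibrium flow for $(G,(d_i),\ell)$ and $\bar f$ is feasible for that instance, \eqref{VIP} gives $\sum_{e\in E}\ell_e(f_e)(\bar f_e - f_e)\ge 0$, i.e.
\[
  C(f) = \sum_{e\in E}\ell_e(f_e)f_e \;\le\; \sum_{e\in E}\ell_e(f_e)\bar f_e \;=\; \frac{1}{1+\epsilon}\sum_{e\in E}\ell_e(f_e)f^*_e .
\]
Then I would apply Lemma~\ref{lem:chris} edge by edge with $f'_e = f^*_e$ and sum over $e\in E$, obtaining
\[
  \sum_{e\in E}\ell_e(f_e)f^*_e \;\le\; \frac{p}{(p+1)^{1+1/p}}\sum_{e\in E}\ell_e(f_e)f_e + \sum_{e\in E}\ell_e(f^*_e)f^*_e \;=\; \frac{p}{(p+1)^{1+1/p}}\,C(f) + C'_{\mathrm{opt}} .
\]
Substituting this into the previous display yields $(1+\epsilon)C(f)\le \frac{p}{(p+1)^{1+1/p}}C(f) + C'_{\mathrm{opt}}$, hence $\big[(1+\epsilon) - \frac{p}{(p+1)^{1+1/p}}\big]C(f)\le C'_{\mathrm{opt}}$.

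To conclude I would divide through by the bracketed quantity, which requires checking it is strictly positive. Since $(p+1)^{1/p} > 1$ for $p\ge 1$, we have $(p+1)^{1+1/p} = (p+1)(p+1)^{1/p} > p+1 > p$, so $\frac{p}{(p+1)^{1+1/p}} < 1 \le 1+\epsilon$ (and for $p=0$ the coefficient is $0$); thus $(1+\epsilon) - \frac{p}{(p+1)^{1+1/p}} > 0$ for all $\epsilon\ge 0$, and the asserted bound follows. Honestly there is no serious obstacle here: the only genuinely delicate points are the correct choice of the test flow $\bar f = f^*/(1+\epsilon)$ (so that the $\tfrac{1}{1+\epsilon}$ factor appears and meets the $(1+\epsilon)$ on the other side) and verifying positivity of the denominator before rearranging; everything else is the standard variational-inequality-plus-smoothness bookkeeping.
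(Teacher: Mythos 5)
Your proposal is correct and follows essentially the same route as the paper: scale the optimal flow of the enlarged instance by $1/(1+\epsilon)$ to obtain a feasible test flow, apply the variational inequality \eqref{VIP}, bound the resulting cross term with Lemma~\ref{lem:chris}, and rearrange. The only difference is cosmetic — the paper runs the argument for an arbitrary feasible $\hat f$ of the enlarged instance and you specialize to the optimal one, and you additionally verify positivity of the denominator, which the paper notes only after the proof.
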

\begin{proof}
For any feasible $\hat{f}$ for instance $(G, ((1+\epsilon)d_i), \ell)$, we let $\bar{f} = \frac{\hat{f}}{1+\epsilon}$ so that $\bar{f}$ is feasible for instance $(G, (d_i), \ell)$. From \eqref{VIP}, we have
\[
	\sum_{e\in E} \ell_e(f_e) f_e \leq \frac{1}{(1+\epsilon)} \sum_{e\in E} \ell_e(f_e) \hat{f}_e.
\]
The right-hand-side can be bounded using Lemma \ref{lem:chris} as follows:
\[
	\sum_{e\in E} \ell_e(f_e) \hat{f}_e \leq \frac{p}{(p+1)^{1+1/p}} \sum_{e\in E} \ell_e(f_e) f_e +  \sum_{e\in E} \ell_e(\hat{f}_e) \hat{f}_e
\]
Therefore, we have
\[
	\Bigg( 1 - \frac{\frac{p}{(p+1)^{1+1/p}}}{(1+\epsilon)} \Bigg) \sum_{e\in E} \ell_e(f_e) f_e \leq  \frac{1}{(1+\epsilon)} \sum_{e\in E} \ell_e(\hat{f}_e) \hat{f}_e
\]
and consequently
\[
	C(f) \leq \frac{ 1 } { (1+\epsilon) - \frac{p}{(p+1)^{1+1/p}} } C(\hat{f}).
\]
\end{proof}

Since $\frac{p}{(p+1)^{1+1/p}} < 1$ for all $p \geq 0$, Theorem \ref{thm:chris_bound} certainly improves Theorem \ref{thm:roughgarden} by considering a specific form of latency functions.
Note that we obtain Theorem \ref{thm:chris_bound} by comparing an equilibrium flow with any feasible flow. 
Using the following result, however, we will compare the performances of equilibrium flows and obtain a tighter bound.

\begin{theorem}[\citealp{dafermos1984sensitivity}] 
\label{thm:dafermos}
Let $\ell_e(\cdot)$ be a non-decreasing function for all $e\in E$.
Let $f$ and $f'$ be equilibrium flows for instances $(G,(d_i),\ell)$ and $(G, ((1+\epsilon)d_i), \ell)$, respectively.
Then the following inequality holds:
\begin{equation}
	\sum_{i\in [k]} ( \mu_i(f') - \mu_i(f) ) ( d'_i - d_i ) \geq 0
\end{equation}
where $d'_i = (1+\epsilon) d_i$. 
\end{theorem}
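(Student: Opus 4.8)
The plan is to combine two ingredients: the monotonicity of each latency function and the structural identity that, at a Wardrop equilibrium, the realized total cost equals the sum over commodities of the shortest-path value times the demand. First I would record that identity. For an equilibrium flow $f$ of $(G,(d_i),\ell)$, rewrite the edge cost as a path cost, $\sum_{e\in E}\ell_e(f_e)f_e=\sum_{P\in\Pc}\ell_P(f)f_P$; every path $P\in\Pc_i$ carrying positive flow satisfies $\ell_P(f)=\mu_i(f)$ by the equilibrium condition, and paths with $f_P=0$ contribute nothing, so the sum collapses to $\sum_{i\in[k]}\mu_i(f)\sum_{P\in\Pc_i}f_P=\sum_{i\in[k]}\mu_i(f)d_i$. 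The same identity applies to $f'$ with demands $d'_i$.

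Next I would extract two ``cross-term'' inequalities that use only feasibility of the tested flow, not its being an equilibrium. Since $\mu_i(f')=\min_{P\in\Pc_i}\ell_P(f')$, every $P\in\Pc_i$ has $\ell_P(f')\geq\mu_i(f')$, and since $f$ is feasible for $(d_i)$,
\[
  \sum_{e\in E}\ell_e(f'_e)f_e=\sum_{P\in\Pc}\ell_P(f')f_P\ \geq\ \sum_{i\in[k]}\mu_i(f')\sum_{P\in\Pc_i}f_P=\sum_{i\in[k]}\mu_i(f')d_i,
\]
and symmetrically $\sum_{e\in E}\ell_e(f_e)f'_e\geq\sum_{i\in[k]}\mu_i(f)d'_i$ using feasibility of $f'$ for $(d'_i)$. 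Then I would invoke monotonicity: $\bigl(\ell_e(f'_e)-\ell_e(f_e)\bigr)\bigl(f'_e-f_e\bigr)\geq 0$ for each $e\in E$, so summing over $e$ gives $\sum_e\ell_e(f'_e)f'_e+\sum_e\ell_e(f_e)f_e\geq\sum_e\ell_e(f'_e)f_e+\sum_e\ell_e(f_e)f'_e$. Substituting the two equilibrium identities on the left and the two cross-term lower bounds on the right yields
\[
  \sum_{i\in[k]}\mu_i(f')d'_i+\sum_{i\in[k]}\mu_i(f)d_i\ \geq\ \sum_{i\in[k]}\mu_i(f')d_i+\sum_{i\in[k]}\mu_i(f)d'_i,
\]
which rearranges exactly to $\sum_{i\in[k]}\bigl(\mu_i(f')-\mu_i(f)\bigr)\bigl(d'_i-d_i\bigr)\geq 0$.

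The only point requiring care is bookkeeping: applying the path/edge conversions consistently, and checking that the two cross-term estimates sit on the side of the monotonicity inequality where a lower bound is what we can afford to lose (each appears with a plus sign on the right-hand side, so lower-bounding them is exactly the right move). There is no real analytic obstacle. It is worth noting that this argument uses neither the polynomial form of the $\ell_e$ nor the multiplicative relation $d'_i=(1+\epsilon)d_i$ — only that each $\ell_e$ is non-decreasing — so the same proof delivers the statement for an arbitrary pair of demand vectors $(d_i)$ and $(d'_i)$.
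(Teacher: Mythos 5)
Your proof is correct. Note first that the paper does not actually prove this statement: it imports it verbatim from \citet{dafermos1984sensitivity}, so there is no in-paper argument to compare against. Your reconstruction is the standard one, and every step checks out: the identity $C(f)=\sum_{i\in[k]}\mu_i(f)d_i$ follows from the Wardrop condition exactly as you say (the paper itself relies on this identity in the proof of Theorem \ref{thm:equ_bound}); the two cross-term bounds use only feasibility and the definition of $\mu_i$ as a minimum; and the edge-wise inequality $(\ell_e(f'_e)-\ell_e(f_e))(f'_e-f_e)\geq 0$ is immediate from monotonicity. Chaining
\[
\sum_{i}\mu_i(f')d'_i+\sum_{i}\mu_i(f)d_i \;\geq\; \sum_{e}\ell_e(f'_e)f_e+\sum_{e}\ell_e(f_e)f'_e \;\geq\; \sum_{i}\mu_i(f')d_i+\sum_{i}\mu_i(f)d'_i
\]
gives the claim. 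Your closing remark is also accurate and worth keeping: nothing in the argument uses the polynomial form of $\ell_e$ or the proportional relation $d'_i=(1+\epsilon)d_i$, so the inequality holds for any two demand vectors, which is how the original theorem of Dafermos and Nagurney is stated.
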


Theorem \ref{thm:dafermos} shows that the minimum path latency function $\mu_i(\cdot)$ exhibits the characteristics of monotone functions with respect to the travel demand changes. 
In Theorem \ref{thm:equ_bound}, we show that the ratio between the performances of equilibrium flows is bounded by $\frac{1}{1+\epsilon}$. 
See Figure \ref{fig:compare} for comparison.

\begin{figure} \centering
\resizebox{0.7\textwidth}{!}{\input{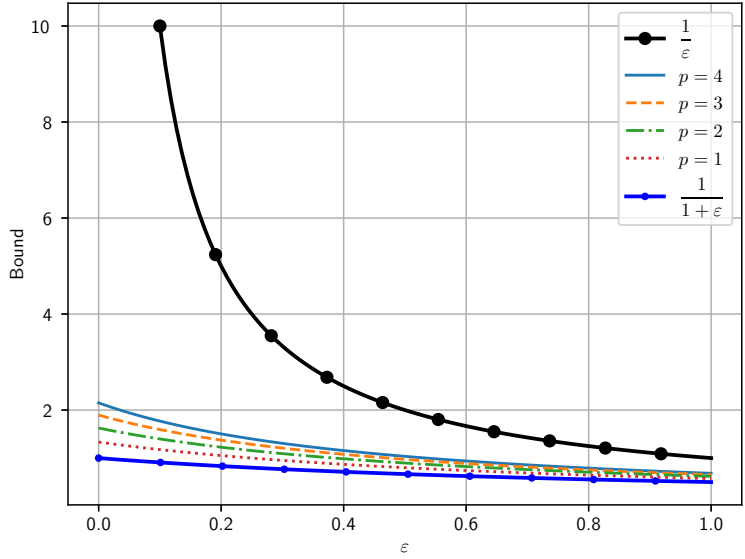}}
\caption{$\frac{1}{\epsilon}$ from Theorem \ref{thm:roughgarden}; $p=1,2,3,4$ from Theorem \ref{thm:chris_bound}; and $\frac{1}{1+\epsilon}$ from Theorem \ref{thm:equ_bound}}
\label{fig:compare}
\end{figure}

\section{Changes of the Price of Anarchy}

We consider the demand changes from $d_i$ to $(1+\epsilon)d_i$ for all commodity $i\in[k]$ for some $\epsilon\geq 0$.
We first compare the performances of system optimum flows in the two instances.

\begin{theorem} \label{thm:opt_bound}
Let $C_{\mathrm{opt}}$ and $C'_{\mathrm{opt}}$ be the cost of an optimal flow for instances $(G,(d_i),\ell)$ and $(G, ((1+\epsilon)d_i), \ell)$ with polynomial latency functions of degree at most $p$ with nonnegative coefficients, respectively.
Then we can show
\begin{align}
	(1+\epsilon) C_{\mathrm{opt}} &\leq C'_{\mathrm{opt}} \leq (1+\epsilon)^{p+1} C_{\mathrm{opt}}.
\end{align}
\end{theorem}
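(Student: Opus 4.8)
The plan is to prove each inequality separately by exhibiting a feasible (not necessarily optimal) flow in one instance obtained by scaling an optimal flow of the other instance, and then comparing costs edge by edge.

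For the lower bound $(1+\epsilon)C_{\mathrm{opt}} \le C'_{\mathrm{opt}}$, let $g'$ be an optimal flow for $(G,((1+\epsilon)d_i),\ell)$, so $C'_{\mathrm{opt}} = C(g')$. Then $\tfrac{1}{1+\epsilon} g'$ is feasible for $(G,(d_i),\ell)$, so $C_{\mathrm{opt}} \le C\bigl(\tfrac{1}{1+\epsilon}g'\bigr)$. Since each $\ell_e$ is non-decreasing and $\tfrac{1}{1+\epsilon}g'_e \le g'_e$, we have $\ell_e\bigl(\tfrac{1}{1+\epsilon}g'_e\bigr) \le \ell_e(g'_e)$, hence $C\bigl(\tfrac{1}{1+\epsilon}g'\bigr) = \sum_{e\in E}\ell_e\bigl(\tfrac{1}{1+\epsilon}g'_e\bigr)\tfrac{1}{1+\epsilon}g'_e \le \tfrac{1}{1+\epsilon}\sum_{e\in E}\ell_e(g'_e)g'_e = \tfrac{1}{1+\epsilon}C'_{\mathrm{opt}}$, which rearranges to the claim.

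For the upper bound $C'_{\mathrm{opt}} \le (1+\epsilon)^{p+1}C_{\mathrm{opt}}$, let $g$ be an optimal flow for $(G,(d_i),\ell)$, so $C_{\mathrm{opt}} = C(g)$. Then $(1+\epsilon)g$ is feasible for $(G,((1+\epsilon)d_i),\ell)$, so $C'_{\mathrm{opt}} \le C((1+\epsilon)g)$. The one substantive ingredient is the pointwise scaling bound $\ell_e((1+\epsilon)x) = \sum_{m=0}^p b_{em}(1+\epsilon)^m x^m \le (1+\epsilon)^p \sum_{m=0}^p b_{em}x^m = (1+\epsilon)^p \ell_e(x)$, which holds for $x \ge 0$ and $\epsilon \ge 0$ because $b_{em}\ge 0$ and $(1+\epsilon)^m \le (1+\epsilon)^p$ whenever $0\le m\le p$. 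Applying this with $x = g_e$ gives $C((1+\epsilon)g) = \sum_{e\in E}\ell_e((1+\epsilon)g_e)(1+\epsilon)g_e \le (1+\epsilon)^{p+1}\sum_{e\in E}\ell_e(g_e)g_e = (1+\epsilon)^{p+1}C_{\mathrm{opt}}$.

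I do not expect a real obstacle here: the lower bound uses only monotonicity of $\ell_e$ together with the linear scaling of feasible flows, and the polynomial-of-degree-at-most-$p$ hypothesis enters only through the elementary inequality $\ell_e((1+\epsilon)x)\le(1+\epsilon)^p\ell_e(x)$ used in the upper bound. The point of isolating this statement is that it is the ``system optimum'' half of the sensitivity analysis of the $\PoA$; combined with the corresponding comparison of equilibrium costs it will deliver the $(1+\epsilon)^p$ bound on the change of the price of anarchy.
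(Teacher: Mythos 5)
Your proof is correct and is essentially identical to the paper's: both directions scale an optimal flow of one instance to obtain a feasible flow for the other, with the lower bound using only monotonicity of $\ell_e$ and the upper bound using the coefficient-wise inequality $\ell_e((1+\epsilon)x)\le(1+\epsilon)^p\ell_e(x)$ for nonnegative-coefficient polynomials of degree at most $p$. The only difference is cosmetic: the paper writes the lower-bound chain starting from $C'_{\mathrm{opt}}$ and bounding it below, whereas you start from $C_{\mathrm{opt}}$ and bound it above by $\tfrac{1}{1+\epsilon}C'_{\mathrm{opt}}$.
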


\begin{proof}
We let $f'^{*}$ be the optimal flow for instance $(G, ((1+\epsilon)d_i), \ell)$ and $\hat{f} = \frac{f'^*}{1+\epsilon}$.
Then,
\begin{align*}
	C'_{\mathrm{opt}} 
		& = \sum_{e\in E} \ell_e( (1+\epsilon) \hat{f}_e) (1+\epsilon) \hat{f}_e \\
    & = (1+\epsilon) \sum_{e\in E}\ell_e( (1+\epsilon)\hat{f}_e) \hat{f}_e \\
    & \geq (1+\epsilon) \sum_{e\in E}\ell_e(\hat{f}_e) \hat{f}_e \\                 
    & \geq (1+\epsilon) C_{\mathrm{opt}}.
\end{align*}
Also, we let $f^{*}$ be the optimal flow for instance $(G, (d_i), \ell)$ and then $(1+\epsilon) f^*$ is feasible to instance $(G, ((1+\epsilon)d_i), \ell)$.
We have
\begin{align*}
	C'_{\mathrm{opt}} 
		& \leq \sum_{e\in E} \ell_e( (1+\epsilon) f^*_e) (1+\epsilon) f^*_e \\
		& = \sum_{e\in E} \bigg( \sum_{m=0}^p {b}_{em} ( (1+\epsilon) f^*_e )^m \bigg) (1+\epsilon) f^*_e\\		
		& \leq \sum_{e\in E} \bigg(\sum_{m=0}^p {b}_{em} (1+\epsilon)^p (f^*_e)^m \bigg) (1+\epsilon) f^*_e\\				
    & = (1+\epsilon)^{p+1} \sum_{e\in E} \ell_e(f^*_e) f^*_e \\
    & = (1+\epsilon)^{p+1} C_{\mathrm{opt}}.
\end{align*}
\end{proof}

Next, we compare the performances of equilibrium flows in the two instances.
Although Theorem 3 of \citet{englert2010sensitivity} considers single-commodity networks and focuses on path latency, the same technique is valid for showing the following theorem for multi-commodity networks.
While the bound on the path latency does not hold in multi-commodity networks as noted by \citet{englert2010sensitivity}, it still provides a bound on the total cost.
Using Theorem \ref{thm:dafermos}, we can also provide a lower bound.

\begin{theorem} \label{thm:equ_bound}
Let $f$ and $f'$ be equilibrium flows for instances $(G,(d_i),\ell)$ and $(G, ((1+\epsilon)d_i), \ell)$ with polynomial latency functions of degree at most $p$ with nonnegative coefficients, respectively.
Then we can show
\begin{align}
	(1+\epsilon) C(f) &\leq C(f') \leq (1+\epsilon)^{p+1} C(f).
\end{align}
\end{theorem}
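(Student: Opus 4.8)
The plan is to reduce both inequalities to the identity $C(g)=\sum_{i\in[k]}\mu_i(g)\,r_i$, valid for any Wardrop equilibrium $g$ of an instance with demands $(r_i)$: for $P\in\Pc_i$, $g_P>0$ forces $\ell_P(g)=\mu_i(g)$, so summing $\ell_P(g)g_P$ over $P\in\Pc_i$ and using $\sum_{P\in\Pc_i}g_P=r_i$ gives the identity (then sum over $i$). In particular $C(f)=\sum_{i\in[k]}\mu_i(f)\,d_i$ and $C(f')=(1+\epsilon)\sum_{i\in[k]}\mu_i(f')\,d_i$, so the theorem becomes equivalent to
\[
  \sum_{i\in[k]}\mu_i(f)\,d_i\ \le\ \sum_{i\in[k]}\mu_i(f')\,d_i\ \le\ (1+\epsilon)^{p}\sum_{i\in[k]}\mu_i(f)\,d_i .
\]

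The left inequality is Theorem \ref{thm:dafermos} in disguise: with $d'_i=(1+\epsilon)d_i$ it states $\epsilon\sum_{i\in[k]}\bigl(\mu_i(f')-\mu_i(f)\bigr)d_i\ge 0$, that is $\sum_{i}\mu_i(f')d_i\ge\sum_{i}\mu_i(f)d_i$ when $\epsilon>0$ (and the claim is trivial for $\epsilon=0$); multiplying by $1+\epsilon$ gives $C(f')\ge(1+\epsilon)C(f)$.

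For the right inequality I would adapt the technique behind Theorem 3 of \citet{englert2010sensitivity}. A crude estimate does not suffice: testing the variational inequality \eqref{VIP} for $f'$ against the feasible flow $(1+\epsilon)f$ only yields $C(f')\le(1+\epsilon)\sum_{e\in E}\ell_e(f'_e)f_e$, and on edges where $f'_e>(1+\epsilon)f_e$ the summand can far exceed $(1+\epsilon)^p\ell_e(f_e)f_e$, so this alone does not produce the exponent $p+1$. One needs a path-based comparison, built on the elementary fact that, since each $\ell_e$ is a polynomial of degree at most $p$ with nonnegative coefficients, monotonicity together with $(1+\epsilon)^m\le(1+\epsilon)^p$ for $0\le m\le p$ gives $\ell_e(f'_e)\le(1+\epsilon)^p\ell_e(f_e)$ on any edge with $f'_e\le(1+\epsilon)f_e$; hence $\ell_P(f')\le(1+\epsilon)^p\ell_P(f)$ for any path $P$ none of whose edges is \emph{over-congested} by $f'$ in this sense, and if in addition $P$ is a minimum-latency path for its commodity under $\ell(f)$ then this reads $\mu_i(f')\le\ell_P(f')\le(1+\epsilon)^p\mu_i(f)$. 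The core of the proof is then a combinatorial argument — using the structure of the two equilibria together with flow conservation across cuts of the support of $f$ — that the base demand of the commodities can be rerouted, staying within the minimum-latency structure of $f$ and onto edges not over-congested by $f'$, in such a way that $\sum_{i}\mu_i(f')d_i\le(1+\epsilon)^p\sum_{i}\mu_i(f)d_i$ in aggregate; with the cost identities this is exactly $C(f')\le(1+\epsilon)^{p+1}C(f)$.

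The step I expect to be the real obstacle is that rerouting argument in the multi-commodity setting. As \citet{englert2010sensitivity} note, with several commodities $f'$ can over-congest every $\ell(f)$-minimum-latency path of some single commodity, so the per-commodity inequality $\mu_i(f')\le(1+\epsilon)^p\mu_i(f)$ may fail and one cannot treat each commodity in isolation; what has to be shown is that $f'$ still cannot over-congest, all at once, a family of used paths carrying a large share of the combined base demand. Making the cut and flow-conservation bookkeeping close at the level of the weighted sum $\sum_i\mu_i(f')d_i$ — and using the equilibrium property of $f'$ itself there, not merely its feasibility — is where the substance lies; the two cost identities and the passage back to $C(f')$ are routine.
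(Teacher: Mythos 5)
Your treatment of the lower bound is correct and is essentially the paper's own argument: the identity $C(g)=\sum_{i\in[k]}\mu_i(g)r_i$ for an equilibrium $g$ with demands $(r_i)$, combined with Theorem \ref{thm:dafermos}, gives $(1+\epsilon)C(f)\le C(f')$ exactly as the paper does.

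The upper bound, however, has a genuine gap. You correctly note that the per-commodity comparison $\mu_i(f')\le(1+\epsilon)^p\mu_i(f)$ can fail with several commodities, but the rerouting-across-cuts argument you propose to repair it is left entirely unexecuted --- you yourself flag it as ``the real obstacle'' --- and, more importantly, it is not the mechanism behind Theorem 3 of \citet{englert2010sensitivity}, which is what the paper invokes for this half. That proof is a potential-function argument, not a path-based one: both $f$ and $f'$ minimize the Beckmann potential $\Phi(x)=\sum_{e\in E}\int_0^{x_e}\ell_e(u)\du$ over their respective feasible sets, and since $f'/(1+\epsilon)$ is feasible for the first instance and $(1+\epsilon)f$ for the second, one obtains the two scalar inequalities $\Phi(f)\le\Phi\bigl(f'/(1+\epsilon)\bigr)$ and $\Phi(f')\le\Phi\bigl((1+\epsilon)f\bigr)$, which, after reducing to monomial edge latencies, combine under a suitable nonnegative linear combination to give $C(f')\le(1+\epsilon)^{p+1}C(f)$ directly at the level of aggregate edge costs; the paper's Lemma \ref{lem:upper_equ} displays exactly these inequalities. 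Because the potential and the scaling of feasible flows are oblivious to the number of commodities, this argument transfers to the multi-commodity case with no rerouting or cut bookkeeping at all, which is precisely why the paper can assert the upper bound by citation even though the path-latency bound itself fails. Without either carrying out your combinatorial argument in full or switching to the potential-function route, your proof of the right-hand inequality is incomplete.
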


\begin{proof}
By Theorem \ref{thm:dafermos}, we have
\begin{align*}
	0 
		& \leq \sum_{i\in [k]} ( \mu_i(f') - \mu_i(f) ) ( (1+\epsilon) d_i - d_i ) \\
		& = \epsilon \sum_{i\in [k]} ( \mu_i(f') - \mu_i(f) ) d_i  \\
		& = \frac{\epsilon}{1+\epsilon} \sum_{i\in [k]} \mu_i(f') (1+\epsilon) d_i - \epsilon \sum_{i\in [k]} \mu_i(f) d_i \\		
		& = \frac{\epsilon}{1+\epsilon} \sum_{i\in [k]} \mu_i(f') d'_i - \epsilon \sum_{i\in [k]} \mu_i(f) d_i \\		
		& = \frac{\epsilon}{1+\epsilon} C(f') - \epsilon C(f),
\end{align*}
which leads to $(1+\epsilon) C(f) \leq C(f')$.

The upper bound, $C(f') \leq (1+\epsilon)^{p+1} C(f)$, is already proved by the proof of Theorem 3 in \citet{englert2010sensitivity}.
\end{proof}

In Theorems \ref{thm:opt_bound} and \ref{thm:equ_bound}, both lower and upper bounds are tight.
The lower bound happens when the latency functions are constant in all edges.
The upper bound happens when the latency function is a monomial of degree $p$ in a single-edge network with a single commodity.

For any $p\geq 0$, we have
\[
	\frac{1}{(1+\epsilon)} \leq \frac{ 1 } { (1+\epsilon) - \frac{p}{(p+1)^{1+1/p}} } < \frac{1}{\epsilon}.
\]
Therefore the bound in Theorem \ref{thm:equ_bound} is tighter than the bounds in Theorems \ref{thm:roughgarden} and \ref{thm:chris_bound}, as seen in Figure \ref{fig:compare}.

When the demand increases, from Theorems \ref{thm:opt_bound} and \ref{thm:equ_bound}, we can observe that the cost of both the optimal flow and the equilibrium flow increases at least by factor of $1+\epsilon$.
We obtain both lower and upper bounds on the change of price of anarchy as follows:

\begin{theorem} \label{thm2}
Let $\rho$ and $\rho'$ denote the Price of Anarchy (PoA) for instances $(G,(d_i),\ell)$ and $(G, ((1+\epsilon)d_i), \ell)$ with polynomial latency functions of degree at most $p$ with nonnegative coefficients, respectively.
Then $\frac{1}{(1+\epsilon)^{p}} \leq \frac{\rho'}{\rho} \leq (1+\epsilon)^p $.
\end{theorem}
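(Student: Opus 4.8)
The plan is to express the ratio $\rho'/\rho$ directly in terms of the four quantities $C(f)$, $C(f')$, $C_{\mathrm{opt}}$, $C'_{\mathrm{opt}}$ and then substitute the two-sided bounds already established in Theorems \ref{thm:opt_bound} and \ref{thm:equ_bound}. By definition of the price of anarchy, $\rho = C(f)/C_{\mathrm{opt}}$ and $\rho' = C(f')/C'_{\mathrm{opt}}$, where $f$ and $f'$ are equilibrium flows for $(G,(d_i),\ell)$ and $(G,((1+\epsilon)d_i),\ell)$ respectively; since the cost of a Wardrop equilibrium is uniquely determined, these quantities are well defined and there is no ambiguity between best- and worst-case equilibria. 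Hence
\[
	\frac{\rho'}{\rho} = \frac{C(f')}{C(f)}\cdot\frac{C_{\mathrm{opt}}}{C'_{\mathrm{opt}}}.
\]

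For the upper bound I would combine $C(f') \leq (1+\epsilon)^{p+1} C(f)$ (Theorem \ref{thm:equ_bound}) with $C'_{\mathrm{opt}} \geq (1+\epsilon) C_{\mathrm{opt}}$ (Theorem \ref{thm:opt_bound}), the latter giving $C_{\mathrm{opt}}/C'_{\mathrm{opt}} \leq 1/(1+\epsilon)$; multiplying the two estimates yields $\rho'/\rho \leq (1+\epsilon)^{p+1}/(1+\epsilon) = (1+\epsilon)^p$. For the lower bound I would argue symmetrically, using $C(f') \geq (1+\epsilon) C(f)$ together with $C'_{\mathrm{opt}} \leq (1+\epsilon)^{p+1} C_{\mathrm{opt}}$, i.e.\ $C_{\mathrm{opt}}/C'_{\mathrm{opt}} \geq (1+\epsilon)^{-(p+1)}$; multiplying gives $\rho'/\rho \geq (1+\epsilon)\cdot(1+\epsilon)^{-(p+1)} = (1+\epsilon)^{-p}$.

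There is essentially no genuine obstacle once Theorems \ref{thm:opt_bound} and \ref{thm:equ_bound} are available; the remaining points are just bookkeeping. I would note that all four costs are finite and that, unless every latency function is identically zero (in which case $\rho = \rho' = 1$ and the inequalities hold trivially), $C_{\mathrm{opt}}, C'_{\mathrm{opt}} > 0$, so the divisions above are legitimate and $\rho/\rho'$ is a genuine ratio. Finally, I would remark that both exponents are best possible: the tightness observations following Theorem \ref{thm:equ_bound} show that the lower bound $(1+\epsilon)^{-p}$ is attained when the optimum cost scales by $(1+\epsilon)^{p+1}$ while the equilibrium cost scales only by $(1+\epsilon)$, and the upper bound $(1+\epsilon)^p$ is attained in the mirror-image situation, so neither exponent can be reduced.
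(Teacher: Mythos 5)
Your proof is correct and takes essentially the same route as the paper's: the same decomposition $\frac{\rho'}{\rho} = \frac{C(f')}{C(f)}\cdot\frac{C_{\mathrm{opt}}}{C'_{\mathrm{opt}}}$ combined with the appropriate halves of Theorems \ref{thm:opt_bound} and \ref{thm:equ_bound} in each direction (the paper phrases the lower bound as an upper bound on $\rho/\rho'$, which is the same computation). One caveat: your closing remark that both exponents are ``best possible'' is not supported --- attaining either bound requires the equilibrium-cost ratio and the optimum-cost ratio to hit opposite extremes simultaneously, and the paper's Lemmas \ref{lem:upper_opt}--\ref{lem:lower_equ} show these extremes impose conflicting structural requirements on the latency functions, so the combined bounds are generally not tight (and provably unattainable for $\epsilon$ satisfying \eqref{large_e}).
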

\begin{proof}
We can show that
\[
\frac{\rho'}{\rho} 
= \frac{ C(f')/C'_{\textrm{opt}} }{ C(f)/C_{\textrm{opt}} } 
= \frac{C(f')}{C(f)} \cdot \frac{C_{\textrm{opt}}}{C'_{\textrm{opt}}} 
\leq (1+\epsilon)^{p+1} \cdot \frac{1}{1+\epsilon} 
= (1+\epsilon)^p
\]
where the inequality holds by Theorem \ref{thm:opt_bound}.
Similarly,
\[
\frac{\rho}{\rho'} 
= \frac{ C(f)/C_{\textrm{opt}} }{ C(f')/C'_{\textrm{opt}} } 
= \frac{C(f)}{C(f')} \cdot \frac{C'_{\textrm{opt}}}{C_{\textrm{opt}}} 
\leq \frac{1}{1+\epsilon} \cdot (1+\epsilon)^{p+1}
= (1+\epsilon)^p
\]

\end{proof}

The upper bound is identical to the result of \citet{englert2010sensitivity}, but holds for multi-commodity networks.
\citet{o2016mechanisms} study how the price of anarchy may decay as the demand increases. 
When the price of anarchy decreases, the lower bound in Theorem \ref{thm2} provides useful information.

\section{Examples and Insights}

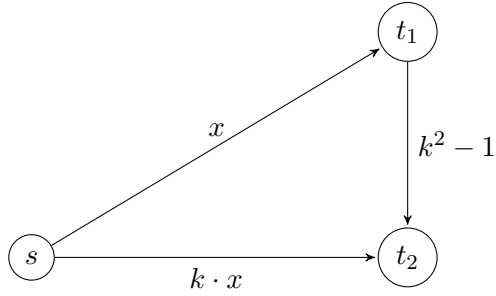
\begin{figure}\centering
	\begin{tikzpicture}[>=stealth',shorten >=1pt,auto,node distance=2.8cm]
		\node[shape=circle, draw] (s)  at (0,0) {$s$};
		\node[shape=circle, draw] (t2) at (5,0) {$t_2$};
		\node[shape=circle, draw] (t1) at (5,3) {$t_1$};		
		\path[->] (s) edge [above] node {$x$} (t1);
		\path[->] (t1) edge [right] node {$k^2-1$} (t2);
		\path[->] (s) edge [below] node {$k\cdot x$} (t2);
	\end{tikzpicture}
	\caption{An example \citep{englert2010sensitivity} with two commodities: from $s$ to $t_1$ and from $s$ to $t_2$}
	\label{fig:example}
\end{figure}

Consider an example in Figure \ref{fig:example}, originally considered in \citet{englert2010sensitivity}.
The edge latency function $\ell_e(x)$ is written on each edge as a function of edge flow $x$.
There are two commodities with demand $d_1=1$ and $d_2=k$ for constant $k\geq 1$. 
Suppose we increase the travel demand by factor $(1+\epsilon)$ for both commodities.
At the equilibrium, the path latency for $d_1$ increases from $1$ to $1+k\epsilon$ and the path latency for $d_2$ increases from $k^2$ to $k^2+k\epsilon$; by a multiplicative factor of $(1+k\epsilon)$ and $(1+\frac{\epsilon}{k})$, respectively.
For $d_1$, the multiplicative increase factor exceeds the bound $(1+\epsilon)^p$, while it is below the bound for $d_2$. 
Although the increase in the travel demand for commodities is uniform, the increase in the resulting path latency is not.

The total cost, however, is still bounded as indicated in Theorem \ref{thm:equ_bound}.
Before the increase, the total cost at the equilibrium is 
$C(f) = 1 + k^3$, while after the increase, it is $C(f') = (1+\epsilon)( 1+2k\epsilon + k^3)$.
Since $k\geq 1$, it is easy to show that the ratio $C(f')/C(f)$ is bounded below and above as follows:
\[
(1+\epsilon) \leq \frac{(1+\epsilon)( 1+2k\epsilon + k^3)}{1+k^3} \leq (1+\epsilon)^2.
\]
Note that when $k=1$, the ratio is equal to the upper bound. 
The lower bound becomes tight only when $\epsilon=0$. 
It is easy to see, however, that the lower bound in Theorem \ref{thm:equ_bound} becomes tight when all edge latency functions are constant functions.

We can obtain an insight from this example for the upper bound.
At the equilibrium $f$, the total cost is the sum of the path latency, weighted by the travel demand:
\[
	C(f) = \sum_{i\in[k]} \mu_i(f) d_i.
\]
Since the increase in $C(f)$ is bounded, an over-increase in the path latency for a certain commodity is alleviated by under-increases in the path latency for other commodities.
Those commodities with under-increases likely have more total travel demands than those with over-increases.
In the example in Figure \ref{fig:example}, $d_1=1$ and $d_2=k$ with $k\geq 1$; therefore the travel demand for the second commodity with under-increase in the path latency exceeds the travel demand for the first commodity with over-increase.

\begin{figure}\centering
	\begin{tikzpicture}[>=stealth',shorten >=1pt,auto,node distance=2.8cm]
		\node[shape=circle, draw] (s) at (0,0) {$s$};
		\node[shape=circle, draw] (t) at (5,0) {$t$};
		\path[->] (s) edge [bend left] node {$x^p$} (t);
		\path[->] (s) edge [bend right] node [below] {$1$} (t);
	\end{tikzpicture}
	\caption{Nonlinear Pigou Example \citep{roughgarden2005selfish}}
	\label{fig:pigou_example}
\end{figure}

\begin{figure} \centering
	\resizebox{0.5\textwidth}{!}{\input{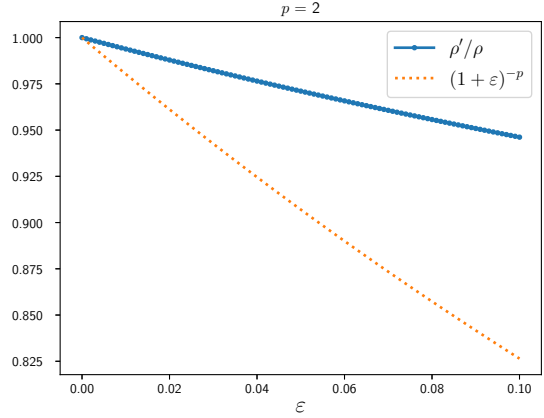}}
	\caption{Price of Anarchy decreases as $\epsilon$ increases in the nonlinear Pigou example}
	\label{fig:pigou}	
\end{figure}

We consider the nonlinear Pigou example in Figure \ref{fig:pigou_example}, which consists of a single commodity from $s$ to $t$ through two edges. 
In this example, the PoA decreases as $\epsilon$ increases.
When the travel demand is $1$, the PoA is at its upper bound, $\rho = \frac{(p+1)^{(1+1/p)}}{(p+1)^{(1+1/p)} - p}$.
If the travel demand increases to $1+\epsilon$, then the total cost at equilibrium is $1+\epsilon$ and the optimal total cost is $1+\epsilon - \frac{p}{1+p} \big(\frac{1}{1+p} \big)^{1/p}$.
Therefore, the PoA becomes
\[
	\rho' 
	= 
	\frac{1+\epsilon}{1+\epsilon - \frac{p}{1+p} \big(\frac{1}{1+p} \big)^{1/p}}
	=
	\bigg[ 1 - \frac{1}{1+\epsilon} \frac{p}{1+p} \Big(\frac{1}{1+p} \Big)^{1/p} \bigg]^{-1},
\]
which monotonically decreases as $\epsilon$ increases.
In Figure \ref{fig:pigou}, $\rho'/\rho$ is compared with the lower bound in Theorem \ref{thm2}.

\tikzset{shifted path/.style args={from #1 to #2 with label #3}{insert path={
let \p1=($(#1.east)-(#1.center)$),
\p2=($(#2.east)-(#2.center)$),\p3=($(#1.center)-(#2.center)$),
\n1={veclen(\x1,\y1)},\n2={veclen(\x2,\y2)},\n3={atan2(\y3,\x3)} in
(#1.{\n3+180+asin(\pgfkeysvalueof{/tikz/shifted path/dist}/\n1)}) edge[/tikz/shifted path/arrows,"#3"] (#2.{\n3-asin(\pgfkeysvalueof{/tikz/shifted path/dist}/\n2)})
}},back and forth/.style={/utils/exec=\pgfkeys{/tikz/shifted path/.cd,#1},
shifted path=from \pgfkeysvalueof{/tikz/shifted path/from} to \pgfkeysvalueof{/tikz/shifted path/to} with label \pgfkeysvalueof{/tikz/shifted path/label 1},
shifted path=from \pgfkeysvalueof{/tikz/shifted path/to} to \pgfkeysvalueof{/tikz/shifted path/from} with label \pgfkeysvalueof{/tikz/shifted path/label 2}},
shifted path/.cd,dist/.initial=2pt,arrows/.style={-stealth},from/.initial=1,to/.initial=2,label 1/.initial={},label 2/.initial={}}

\begin{figure}\centering
\resizebox{2in}{!}{
	\begin{tikzpicture}[>=stealth',shorten >=1pt,auto,node distance=2cm]
		\node[shape=circle, draw] (1)  {1};
		\node[shape=circle, draw] (3) [below of=1] {3};
		\node[shape=circle, draw] (4) [right of=3] {4};
		\node[shape=circle, draw] (5) [right of=4] {5};
		\node[shape=circle, draw] (6) [right of=5] {6};
		\node[shape=circle, draw] (2) [above of=6] {2};
		\node[shape=circle, draw] (9) [below of=5] {9};
		\node[shape=circle, draw] (8) [right of=9] {8};
		\node[shape=circle, draw] (7) [right of=8] {7};
		\node[shape=circle, draw] (18) [below of=7] {18};
		\node[shape=circle, draw] (16) [left of=18] {16};
		\node[shape=circle, draw] (10) [left of=16] {10};
		\node[shape=circle, draw] (11) [left of=10] {11};
		\node[shape=circle, draw] (12) [left of=11] {12};
		\node[shape=circle, draw] (17) [below of=16] {17};
		\node[shape=circle, draw] (19) [below of=17] {19};
		\node[shape=circle, draw] (15) [left of=19] {15};
		\node[shape=circle, draw] (14) [left of=15] {14};
		\node[shape=circle, draw] (23) [below of=14] {23};
		\node[shape=circle, draw] (22) [right of=23] {22};
		\node[shape=circle, draw] (24) [below of=23] {24};
		\node[shape=circle, draw] (13) [left of=24] {13};
		\node[shape=circle, draw] (21) [right of=24] {21};
		\node[shape=circle, draw] (20) [right of=21] {20};

   \draw[shifted path/arrows/.style={stealth-},
   back and forth/.list={
   	{from=1,to=2,label 1=,label 2=},
    {from=1,to=3,label 1=,label 2=},
    {from=2,to=6,label 1=,label 2=},
    {from=3,to=4,label 1=,label 2=},
    {from=4,to=5,label 1=,label 2=},
    {from=5,to=6,label 1=,label 2=},
    {from=3,to=12,label 1=,label 2=},
    {from=4,to=11,label 1=,label 2=},
    {from=5,to=9,label 1=,label 2=},
    {from=6,to=8,label 1=,label 2=},
    {from=9,to=8,label 1=,label 2=},
    {from=8,to=7,label 1=,label 2=},
    {from=12,to=11,label 1=,label 2=},
    {from=11,to=10,label 1=,label 2=},
    {from=9,to=10,label 1=,label 2=},
    {from=8,to=16,label 1=,label 2=},
    {from=7,to=18,label 1=,label 2=},
    {from=10,to=16,label 1=,label 2=},
    {from=16,to=18,label 1=,label 2=},
    {from=9,to=10,label 1=,label 2=},
    {from=12,to=13,label 1=,label 2=},
    {from=11,to=14,label 1=,label 2=},
    {from=10,to=15,label 1=,label 2=},
    {from=10,to=17,label 1=,label 2=},
    {from=16,to=17,label 1=,label 2=},
    {from=17,to=19,label 1=,label 2=},
    {from=14,to=15,label 1=,label 2=},
    {from=15,to=19,label 1=,label 2=},
    {from=14,to=23,label 1=,label 2=},
    {from=15,to=22,label 1=,label 2=},
    {from=19,to=20,label 1=,label 2=},
    {from=18,to=20,label 1=,label 2=},
    {from=23,to=22,label 1=,label 2=},
    {from=23,to=24,label 1=,label 2=},
    {from=22,to=21,label 1=,label 2=},
    {from=22,to=20,label 1=,label 2=},
    {from=13,to=24,label 1=,label 2=},
    {from=24,to=21,label 1=,label 2=},
    {from=21,to=20,label 1=,label 2=},
    }];

	\end{tikzpicture}
}
	\caption{Sioux Falls Network with 24 nodes, 76 edges, and 528 commodities}
	\label{fig:siouxfalls}
\end{figure}
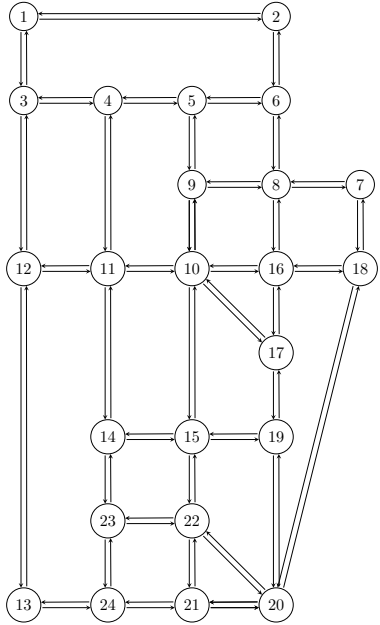

We test the Sioux Falls network for a single commodity following \citet{o2016mechanisms}. 
The Sioux Falls network, shown in Figure \ref{fig:siouxfalls}, has been popularly used in transportation research and the dataset is available online \citep{TNTP}.
The edge latency functions are polynomials of order $p=4$.
We choose a single commodity from node 20 to node 3 and set the initial travel demand as 1,000.
Then we increase the demand by 10\% each time by multiplying $(1+\epsilon)$ with $\epsilon=0.1$.
We repeat 50 times.
We compute both Wardrop equilibrium and system optimal flows.
The resulting price of anarchy (PoA) is presented in Figure \ref{fig:PoA_single}.
As the travel demand increases, PoA tends to increase and then decrease, but not monotonically.
There are several up-and-down points.
In each travel demand increase, the ratio between two PoA values, namely $\rho'/\rho$ as in Theorem \ref{thm2} is computed and presented in Figure \ref{fig:ratio_single}.
While the ratio is bounded between $(1+\epsilon)^{-4}=0.683$ and $(1+\epsilon)^4=1.464$ as given in Theorem \ref{thm2}, the ratios remain near 1.0 mostly.

\begin{figure} \centering
	\begin{subfigure}[b]{0.49\textwidth}\centering
	\resizebox{\textwidth}{!}{\input{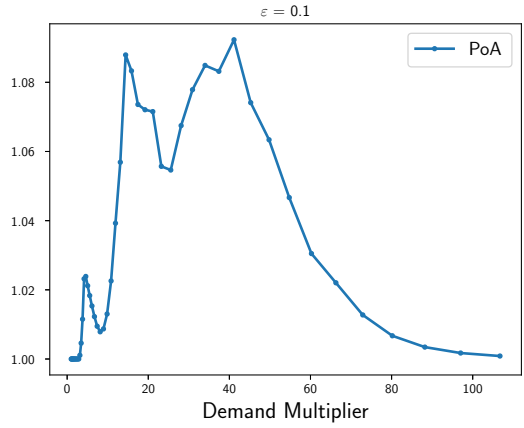}}
	\caption{}
	\label{fig:PoA_single}
	\end{subfigure}
	\begin{subfigure}[b]{0.49\textwidth}\centering
	\resizebox{\textwidth}{!}{\input{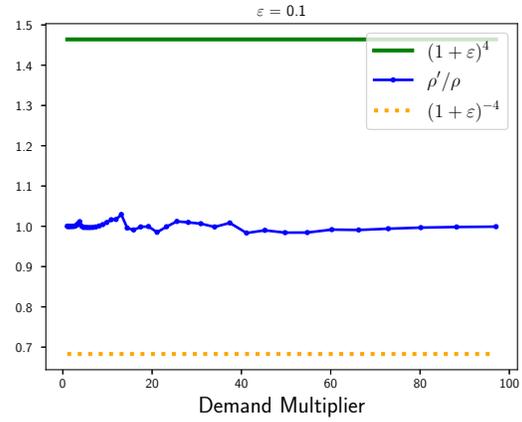}}
	\caption{}
	\label{fig:ratio_single}
	\end{subfigure}
	\caption{Price of Anarchy in Sioux Falls network with single commodity}
	\label{}	
\end{figure}

In the original dataset of the Sioux Falls network, there are 528 commodities with non-zero travel demand. 
In this time, we consider all 528 commodities.
We set the initial travel demand as 5\% of the original travel demand given in the dataset and then start increasing demands by factor of $(1+\epsilon)$ with $\epsilon=0.1$.
We repeat this process 40 times.
The results are presented in Figure \ref{fig:full}.
Similar observations can be made for both single-commodity and 528-commodity cases.

\begin{figure} \centering
	\begin{subfigure}[b]{0.49\textwidth}\centering
	\resizebox{\textwidth}{!}{\input{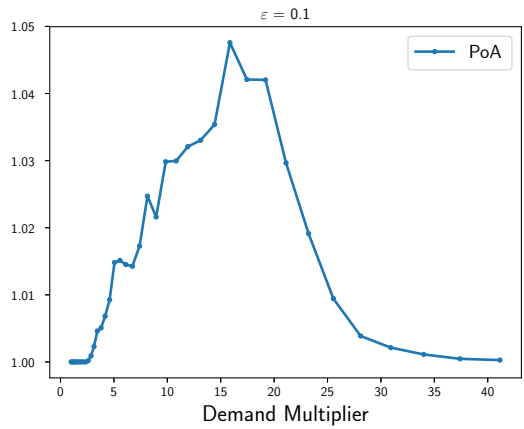}}
	\caption{}
	\label{fig:PoA_full}
	\end{subfigure}
	\begin{subfigure}[b]{0.49\textwidth}\centering
	\resizebox{\textwidth}{!}{\input{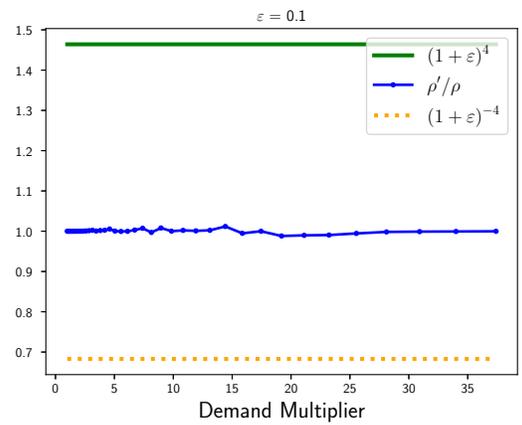}}
	\caption{}
	\label{fig:ratio_full}
	\end{subfigure}
	\caption{Price of Anarchy in Sioux Falls network with 528 commodities}
	\label{fig:full}
\end{figure}

To observe the effect of $\epsilon$ on the PoA ratio, we present the PoA ratio $\frac{\rho'}{\rho}$ for various $\epsilon$ values for single-commodity and 528-commodity Sioux Falls network in Figure \ref{fig:eps3}.
For the single OD pair Sioux Falls network, we set the initial demand to 3000 units.
For the full Sioux Falls network, we set the initial demand to the original travel demand.
As presented in Figure \ref{fig:eps1}, for the single commodity network, the PoA ratio is not monotone and it has a breakpoint.
The PoA ratio initially increases with $\epsilon$ with an increasing rate until the breakpoint, after which it decreases with a decreasing rate up to a local minimum point and then increases again.
For the full Sioux Falls network, shown in Figure \ref{fig:eps2}, the PoA ratio decreases with a non-increasing rate.

\begin{figure} \centering
	\begin{subfigure}[b]{0.49\textwidth}\centering
	\resizebox{\textwidth}{!}{\input{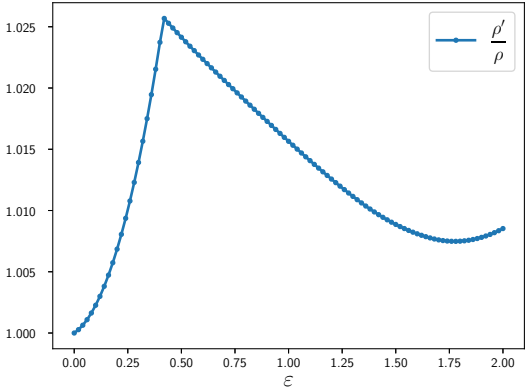}}
	\caption{Sioux Falls network with single commodity}
	\label{fig:eps1}
	\end{subfigure}
	\begin{subfigure}[b]{0.49\textwidth}\centering
	\resizebox{\textwidth}{!}{\input{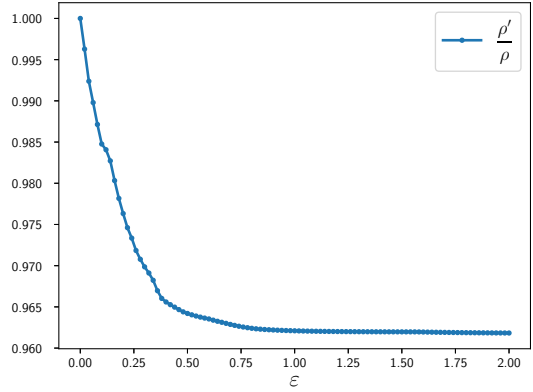}}
	\caption{Sioux Falls network with 528 commodities}
	\label{fig:eps2}
	\end{subfigure}
	\caption{PoA ratio for different $\epsilon$ values for Sioux Falls network}
	\label{fig:eps3}	
\end{figure}

\section{Discussion on the Non-tightness of the Bounds}

In the examples, we observe that the bounds in Theorem \ref{thm2} are not tight.
Indeed, it cannot be tight for large $\epsilon$.
Clearly, the PoA $\rho$ is bounded as follows \citep{roughgarden2005selfish}:
\[
	1 \leq \rho \leq \frac{(p+1)^{(1+1/p)}}{(p+1)^{(1+1/p)} - p}
\]
and so is $\rho'$.
Therefore, 
\[
	\frac{(p+1)^{(1+1/p)} - p}{(p+1)^{(1+1/p)}} \leq \frac{\rho'}{\rho} \leq \frac{(p+1)^{(1+1/p)}}{(p+1)^{(1+1/p)} - p} 
\]
also holds regardless $\epsilon$.
This indicates that the bounds in Theorem \ref{thm2} will never be tight for large $\epsilon$ such that
\begin{equation} \label{large_e}
\epsilon \geq \bigg[ \frac{(p+1)^{(1+1/p)}}{(p+1)^{(1+1/p)} - p} \bigg]^{\frac{1}{p}} - 1.
\end{equation}
This effective $\epsilon$ is plotted in Figure \ref{fig:large_e}.

\begin{figure} \centering
	\resizebox{0.5\textwidth}{!}{\input{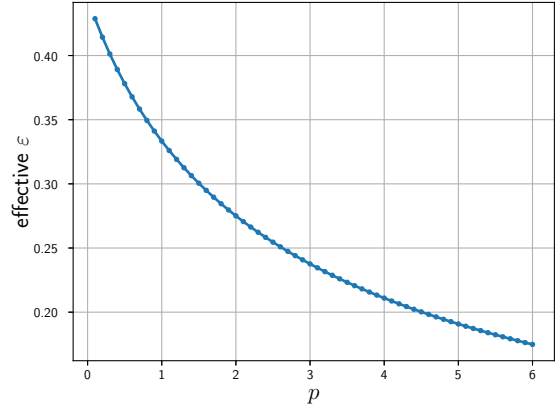}}
	\caption{The maximum $\epsilon$ effective for the bounds in Theorem \ref{thm2}, as given in \eqref{large_e}}
	\label{fig:large_e}	
\end{figure}

Figure \ref{fig:eps6} represents the PoA ratio, theoretical bound $(1+\epsilon)^p$ and the clear bound $\frac{(p+1)^{(1+1/p)}}{(p+1)^{(1+1/p)} - p}$ based on \citet{roughgarden2005selfish} for different $\epsilon$ values for single commodity and full Sioux Falls network ($p=4$). 
As  Figure \ref{fig:eps4} and \ref{fig:eps5} represent the PoA ratio is always less than the theoretical bounds. 
The two values, $(1+\epsilon)^p$ and $\frac{(p+1)^{(1+1/p)}}{(p+1)^{(1+1/p)} - p}$, become the same at about $\epsilon \approx 0.2110$.
When $\epsilon < 0.2110$, $(1+\epsilon)^p$ is smaller compare to
$\frac{(p+1)^{(1+1/p)}}{(p+1)^{(1+1/p)} - p}$; for larger $\epsilon$ values, it is greater than $\frac{(p+1)^{(1+1/p)}}{(p+1)^{(1+1/p)} - p}$.
Hence, the obtained bound $(1+\epsilon)^p$ can be tight only for small $\epsilon$ values.

\begin{figure} \centering
	\begin{subfigure}[b]{0.49\textwidth}\centering
	\resizebox{\textwidth}{!}{\input{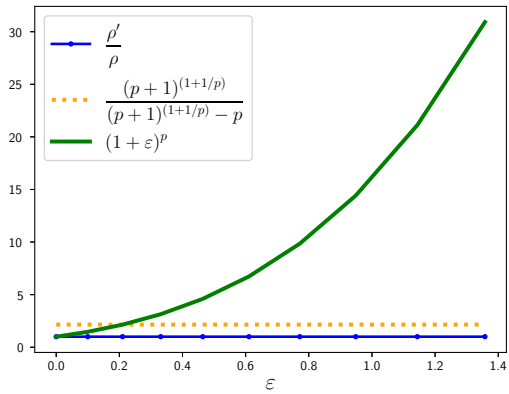}}
	\caption{Sioux Falls network with single commodity}
	\label{fig:eps4}
	\end{subfigure}
	\begin{subfigure}[b]{0.49\textwidth}\centering
	\resizebox{\textwidth}{!}{\input{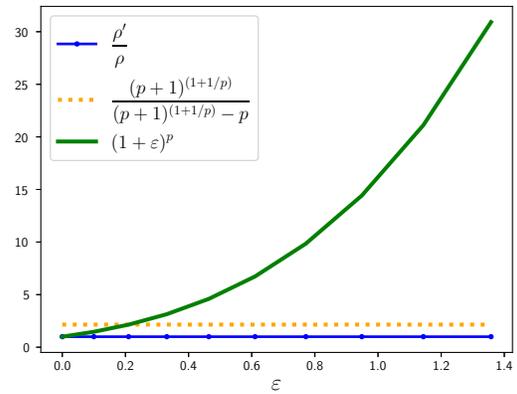}}
	\caption{Sioux Falls network with 528 commodities}
	\label{fig:eps5}
	\end{subfigure}
	\caption{Comparing PoA ratio and theoretical bounds for different $\epsilon$ values for Sioux Falls network ($p=4$).}
	\label{fig:eps6}	
\end{figure}

In general, we make a series of observations to see how hard it is to achieve the bounds in Theorem \ref{thm2}, while we do not make definitive conclusion.
First, we obtain the following two lemmas regarding the upper bound in Theorem \ref{thm2}.
Note that the upper bound is obtained only if $C'_{\mathrm{opt}} = (1+\epsilon) C_{\mathrm{opt}}$ and $C(f') = (1+\epsilon)^{p+1} C(f)$.

\begin{lemma} \label{lem:upper_opt}
Let $C_{\mathrm{opt}}$ and $C'_{\mathrm{opt}}$ be the cost of an optimal flow for instances $(G,(d_i),\ell)$ and $(G, ((1+\epsilon)d_i), \ell)$ with polynomial latency functions of degree at most $p$ with nonnegative coefficients, respectively.
Let $f^*$ and $f^{'*}$ be the optimal flows for instances $(G,(d_i),\ell)$ and $(G, ((1+\epsilon)d_i), \ell)$, respectively.
Suppose $C'_{\mathrm{opt}} = (1+\epsilon) C_{\mathrm{opt}}$.
Then, there exist edges with constant latency functions; in particular those edges $e \in E$ with $f_e^{'*} > 0$.
\end{lemma}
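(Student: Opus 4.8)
The plan is to extract the conclusion from the equality case of the lower-bound argument in the proof of Theorem~\ref{thm:opt_bound}. Recall that in that proof one sets $\hat{f} = f^{'*}/(1+\epsilon)$, which is feasible for $(G,(d_i),\ell)$, and obtains the chain
\[
C'_{\mathrm{opt}} = (1+\epsilon)\sum_{e\in E}\ell_e((1+\epsilon)\hat{f}_e)\hat{f}_e \geq (1+\epsilon)\sum_{e\in E}\ell_e(\hat{f}_e)\hat{f}_e \geq (1+\epsilon) C_{\mathrm{opt}}.
\]
The hypothesis $C'_{\mathrm{opt}} = (1+\epsilon)C_{\mathrm{opt}}$ forces every inequality in this chain to hold with equality. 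First I would use the first of these equalities to get $\sum_{e\in E}\big(\ell_e((1+\epsilon)\hat{f}_e) - \ell_e(\hat{f}_e)\big)\hat{f}_e = 0$. Since each $\ell_e$ is non-decreasing and $(1+\epsilon)\hat{f}_e \geq \hat{f}_e \geq 0$, every summand is nonnegative, hence every summand vanishes; in particular, for each edge $e$ with $f_e^{'*}>0$ (equivalently $\hat{f}_e>0$) we must have $\ell_e((1+\epsilon)\hat{f}_e) = \ell_e(\hat{f}_e)$.

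The second step turns this pointwise identity into constancy of the latency function. Fix such an edge $e$ and write $x = \hat{f}_e > 0$. Expanding $\ell_e(y) = \sum_{m=0}^p b_{em} y^m$ gives $\sum_{m=0}^{p} b_{em} x^m\big((1+\epsilon)^m - 1\big) = 0$. Taking $\epsilon>0$, each term with $m\geq 1$ is a product of the nonnegative quantities $b_{em}$, $x^m>0$ and $(1+\epsilon)^m - 1>0$, so $b_{em}=0$ for all $m=1,\dots,p$. Hence $\ell_e \equiv b_{e0}$ is constant, which is the claim (the existence of at least one such edge holds whenever the demand is nontrivial, so that $f^{'*}\neq 0$).

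There is no deep obstacle here; the only points requiring care are bookkeeping ones. One must note that the statement is only meaningful for $\epsilon>0$ (for $\epsilon=0$ the two instances coincide and the equality $C'_{\mathrm{opt}}=(1+\epsilon)C_{\mathrm{opt}}$ carries no information), and one must resist over-claiming: the argument yields constancy only on the support of $f^{'*}$, not on all of $E$, which is exactly what the lemma asserts. It is also worth recording for the sequel that the \emph{second} equality in the chain, $\sum_{e\in E} \ell_e(\hat{f}_e)\hat{f}_e = C_{\mathrm{opt}}$, separately says that $\hat{f}$ is itself optimal for the base instance; that observation is not needed for this lemma but will presumably feed the companion analysis of the condition $C(f') = (1+\epsilon)^{p+1}C(f)$.
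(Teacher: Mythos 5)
Your proof is correct and follows essentially the same route as the paper: the same equality chain from the proof of Theorem \ref{thm:opt_bound}, the same extraction of $\ell_e((1+\epsilon)\hat{f}_e)=\ell_e(\hat{f}_e)$ on the support of $f^{'*}$ via monotonicity, and then constancy of the polynomial. Your final step (expanding $\sum_m b_{em}\hat{f}_e^m((1+\epsilon)^m-1)=0$ to force $b_{em}=0$ for $m\geq 1$) just makes explicit what the paper leaves implicit when it passes from ``constant on $[\hat{f}_e,(1+\epsilon)\hat{f}_e]$'' to ``constant,'' and your caveat about $\epsilon>0$ is a reasonable observation.
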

\begin{proof}
We let $f'^{*}$ be the optimal flow for instance $(G, ((1+\epsilon)d_i), \ell)$ and $\hat{f} = \frac{f'^*}{1+\epsilon}$.
Then, from (i), we have
\begin{align*}
	(1+\epsilon) C_{\mathrm{opt}} = C'_{\mathrm{opt}}
		& = \sum_{e\in E} \ell_e( (1+\epsilon) \hat{f}_e) (1+\epsilon) \hat{f}_e \\
    & = (1+\epsilon) \sum_{e\in E}\ell_e( (1+\epsilon)\hat{f}_e) \hat{f}_e \\
    & \geq (1+\epsilon) \sum_{e\in E}\ell_e(\hat{f}_e) \hat{f}_e \\                 
    & \geq (1+\epsilon) C_{\mathrm{opt}}.
\end{align*}
Therefore, we must have 
\[
\sum_{e\in E}\ell_e( (1+\epsilon)\hat{f}_e) \hat{f}_e = \sum_{e\in E}\ell_e(\hat{f}_e) \hat{f}_e.
\]
Consequently,
\[
\sum_{e\in E}[\ell_e( (1+\epsilon)\hat{f}_e) - \ell_e(\hat{f}_e)]\hat{f}_e = 0
\]
Since $\ell_e(\cdot)$ is monotone, we have
\begin{align*}
\hat{f}_e > 0  
& \implies \ell_e((1+\epsilon)\hat{f}_e) = \ell_e(\hat{f}_e) 
\end{align*}
for any $e\in E$.
Therefore, $\ell_e(\cdot)$ is constant on $[\hat{f}_e, (1+\epsilon)\hat{f}_e]$ for any $e\in E$ such that $\hat{f}_e > 0$. 
Since we consider polynomial latency functions, we conclude that there exist constant latency functions.
\end{proof}

\begin{lemma} \label{lem:upper_equ}
Let $f$ and $f'$ be equilibrium flows for instances $(G,(d_i),\ell)$ and $(G, ((1+\epsilon)d_i), \ell)$ with polynomial latency functions of degree at most $p$ with nonnegative coefficients, respectively.
Suppose $C(f') = (1+\epsilon)^{p+1} C(f)$. 
Then, on all edge $e\in E$ with $f_e>0$ or $f'_e>0$, the latency function $\ell_e(\cdot)$ is of order $p$.
\end{lemma}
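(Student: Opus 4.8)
The plan is to argue exactly as in Lemma~\ref{lem:upper_opt}: assume the equality $C(f')=(1+\epsilon)^{p+1}C(f)$ and trace it backwards through the chain of inequalities that yields the upper bound $C(f')\le(1+\epsilon)^{p+1}C(f)$ of Theorem~\ref{thm:equ_bound}, forcing every inequality in that chain to be an equality and reading off what this imposes on each $\ell_e$. If $\epsilon=0$ the statement is vacuous (the two instances coincide), so assume $\epsilon>0$ throughout.

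First I would write the upper-bound argument explicitly. Following Englert et al., one takes $(1+\epsilon)f$ as a feasible flow for $(G,((1+\epsilon)d_i),\ell)$ and invokes the variational inequality~\eqref{VIP} at the equilibrium $f'$ against it, giving
\[
	C(f')=\sum_{e\in E}\ell_e(f'_e)f'_e\le(1+\epsilon)\sum_{e\in E}\ell_e(f'_e)f_e .
\]
The remaining, genuinely multi-commodity, part of the argument bounds $\sum_{e}\ell_e(f'_e)f_e$ by $(1+\epsilon)^pC(f)$; its core is the monomial estimate $\ell_e((1+\epsilon)x)=\sum_{m}b_{em}(1+\epsilon)^mx^m\le(1+\epsilon)^p\ell_e(x)$ — the very inequality used in the proof of Theorem~\ref{thm:opt_bound} — applied on the edges $e$ with $f'_e\le(1+\epsilon)f_e$, which are isolated by decomposing the circulation $f'-(1+\epsilon)f$ into cycles. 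Under the hypothesis the two displayed quantities and $(1+\epsilon)^{p+1}C(f)$ all coincide, so~\eqref{VIP} is tight against $(1+\epsilon)f$ (equivalently $\sum_e\ell_e(f'_e)((1+\epsilon)f_e-f'_e)=0$), $\sum_e\ell_e(f'_e)f_e=(1+\epsilon)^pC(f)$, and every instance of the monomial estimate used in the argument is an equality.

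The easy half of the conclusion concerns the edges that the argument ``charges'' through $f$: wherever $\ell_e((1+\epsilon)f_e)\le(1+\epsilon)^p\ell_e(f_e)$ is used and $f_e>0$, its tightness forces $b_{em}f_e^{m}((1+\epsilon)^{p}-(1+\epsilon)^{m})=0$ for every $m<p$, hence $b_{em}=0$ for all $m<p$ (including $m=0$, since $\epsilon>0$), so $\ell_e(x)=b_{ep}x^p$ is of order $p$ — exactly the reasoning of the proof of Theorem~\ref{thm:opt_bound}, modulo the degenerate case $\ell_e\equiv 0$, which can occur only when $C(f)=0$ and is excluded (or handled separately). I expect the main obstacle to be upgrading this to \emph{all} edges with $f_e>0$ or $f'_e>0$: one must check that the cycle-decomposition/augmenting-path bookkeeping actually touches every such edge, and, for the edges with $f'_e>0$ but $f_e=0$, one must combine the tightness of~\eqref{VIP}, the monotonicity of $\ell_e$, and the equilibrium conditions for $f'$ to show such an edge cannot be loaded unless $\ell_e$ is already of order $p$. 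This combinatorial step — absent from Lemma~\ref{lem:upper_opt}, where the comparison flow $(1+\epsilon)f^*$ is used directly with no augmentation — is where the real work lies; once it is carried out, the lemma follows.
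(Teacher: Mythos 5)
There is a genuine gap. Your plan is to re-derive the upper bound $C(f')\le(1+\epsilon)^{p+1}C(f)$ via the variational inequality \eqref{VIP} at $f'$ against $(1+\epsilon)f$, followed by a cross-term estimate $\sum_e\ell_e(f'_e)f_e\le(1+\epsilon)^pC(f)$ obtained by a cycle decomposition of $f'-(1+\epsilon)f$, and then to force every step to be an equality. But that forward chain is not actually established: the monomial estimate $\ell_e((1+\epsilon)x)\le(1+\epsilon)^p\ell_e(x)$ only helps on edges where $f'_e\le(1+\epsilon)f_e$, and you concede that the bookkeeping needed to cover the remaining edges --- in particular those with $f'_e>0$ but $f_e=0$ --- ``is where the real work lies'' without carrying it out. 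Since the backward-tracing strategy presupposes a complete forward proof, and the forward proof is exactly the part you defer, the lemma is not proved. This is not a cosmetic omission: the edges with $f'_e>0$, $f_e=0$ are precisely half of the lemma's conclusion.

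The paper's proof takes a different and cleaner route that sidesteps the cross-term entirely. After reducing to monic monomial latencies $\ell_e(x)=x^{p_e}$ (splitting each polynomial edge into parallel monomial edges), it uses the fact that equilibria minimize the Beckmann potential $\Phi(x)=\sum_e\int_0^{x_e}\ell_e(u)\du$, giving the two scalar inequalities $\Phi(f)\le\Phi(f'/(1+\epsilon))$ and $\Phi(f')\le\Phi((1+\epsilon)f)$, i.e.\ \eqref{A} and \eqref{B}. Combining these with the equality hypothesis \eqref{C} using the specific nonnegative multipliers $p$, $(p+1)(1+\epsilon)^p-1$, and $(1+\epsilon)^p-1$ yields $\sum_e c_e f_e^{p_e+1}\le\sum_e c'_e f_e'^{p_e+1}$ with $c_e\ge0$ and $c'_e\le0$ for every edge (a computation from \citet{englert2010sensitivity}); hence $c_e=c'_e=0$ on every edge where either $f_e$ or $f'_e$ is positive, and that equation forces $p_e=p$. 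Both flow supports are handled symmetrically and at once, with no cycle decomposition and no separate treatment of edges used by only one of the two equilibria. If you want to salvage your approach you would need to actually supply the missing combinatorial argument; the potential-function route makes that unnecessary.
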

\begin{proof}
Adopting the approach used in Theorem 3 of \citet{englert2010sensitivity}, we consider monic monomial latency functions only: $\ell_e(f_e) = f_e^{p_e}$ for each $e\in E$.
It is well known that equilibrium flows $f$ and $f'$ minimize the potential function
\[
\Phi(x) = \sum_{e\in E} \int_0^{x_e} \ell_e(u) \du
\]
on their respective feasible set. 
Therefore, we have $\Phi(f) \leq \Phi(f'/(1+\epsilon))$ and $\Phi(f') \leq \Phi((1+\epsilon)f)$, which can be written as follows:
\begin{align}
(1+\epsilon)^{p+1} \sum_{e\in E} \frac{1}{p_e+1} f_e^{p_e+1} 
&\leq
\sum_{e\in E} \frac{(1+\epsilon)^{p-p_e}}{p_e+1} f_e^{'p_e+1}, \label{A} \\
\sum_{e\in E}\frac{1}{p_e+1} f_e^{'p_e+1} 
&\leq
\sum_{e\in E} \frac{(1+\epsilon)^{p_e+1}}{p_e+1} f_e^{p_e+1}.  \label{B}
\end{align}
The condition $C(f') = (1+\epsilon)^{p+1} C(f)$ can be written as follows:
\begin{equation} \label{C}
(1+\epsilon)^{p+1} \sum_{e\in E} f_e^{p_e+1} = \sum_{e\in E} f_e^{'p_e+1}.
\end{equation}
We can express $p\cdot\eqref{A} + ((p+1)(1+\epsilon)^p-1)\cdot\eqref{B} + ((1+\epsilon)^p-1)\cdot\eqref{C}$ as follows:
\begin{equation} \label{ABC}
\sum_{e\in E} c_e f_e^{p_e+1} \leq 
\sum_{e\in E} c'_e f_e^{'p_e+1}
\end{equation}
where 
\begin{align*}
c_e  & = p \cdot \frac{(1+\epsilon)^{p+1}}{p_e+1} 
       - ((p+1)(1+\epsilon)^p-1)\cdot \frac{(1+\epsilon)^{p_e+1}}{p_e+1}
       + ((1+\epsilon)^p-1)\cdot (1+\epsilon)^{p+1} \\
c'_e & = p \cdot \frac{(1+\epsilon)^{p-p_e}}{p_e+1} 
       - ((p+1)(1+\epsilon)^p - 1) \cdot \frac{1}{p_e+1} 
       + ((1+\epsilon)^p - 1)
\end{align*}
\citet{englert2010sensitivity} show that $c_e\geq 0$ and $c'_e\leq 0$ for all $e\in E$.
Therefore, for \eqref{ABC} to hold, we must have the following condition for all $e\in E$:
if either $f_e$ or $f'_e$ is nonzero, then $c_e = c'_e = 0$.
It is easy to see that $c_e=c'_e=0$ implies $p_e=p$.
\end{proof}

Lemma \ref{lem:upper_opt} claims that the path latency function $\ell_P(\cdot)$ along any actively used path used by the optimal flow $f^{'*}$ must be a constant function.
Lemma \ref{lem:upper_equ}, on the other hand, indicates that $\ell_P(\cdot)$ along any path used by equilibrium flow either $f$ or $f'$ must be polynomials of order $p$.
Therefore, if the upper bound in Theorem \ref{thm2} is obtained, then we must have $f_e = f'_e = 0$ on any edge used by $f^{'*}$.

We make similar observations regarding the lower bound $(1+\epsilon)^{-p}$.

\begin{lemma} \label{lem:lower_opt}
Let $C_{\mathrm{opt}}$ and $C'_{\mathrm{opt}}$ be the cost of an optimal flow for instances $(G,(d_i),\ell)$ and $(G, ((1+\epsilon)d_i), \ell)$ with polynomial latency functions of degree at most $p$ with nonnegative coefficients, respectively.
Let $f^*$ and $f^{'*}$ be the optimal flows for instances $(G,(d_i),\ell)$ and $(G, ((1+\epsilon)d_i), \ell)$, respectively.
Suppose $C'_{\mathrm{opt}} = (1+\epsilon)^{p+1} C_{\mathrm{opt}}$.
Then, on all edge $e\in E$ with $f_e^{*} > 0$, the latency function $\ell_e(\cdot)$ is of order $p$.
\end{lemma}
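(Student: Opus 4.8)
The plan is to re-run the second (upper-bound) half of the proof of Theorem~\ref{thm:opt_bound} and read off its equality conditions. That bound, $C'_{\mathrm{opt}} \le (1+\epsilon)^{p+1} C_{\mathrm{opt}}$, was obtained by taking the feasible flow $(1+\epsilon) f^*$ in the larger instance and estimating edge by edge, using $((1+\epsilon) f^*_e)^m = (1+\epsilon)^m (f^*_e)^m \le (1+\epsilon)^p (f^*_e)^m$ for $0 \le m \le p$. When $C'_{\mathrm{opt}} = (1+\epsilon)^{p+1} C_{\mathrm{opt}}$, every inequality in that chain collapses to an equality, and the collapse of the termwise estimate is exactly what pins down the latency functions on edges carrying positive $f^*$-flow.

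First I would set $\hat f = (1+\epsilon) f^*$, feasible for $(G,((1+\epsilon)d_i),\ell)$, and recall the chain
\[
C'_{\mathrm{opt}} \;\le\; \sum_{e\in E} \ell_e(\hat f_e)\,\hat f_e \;\le\; (1+\epsilon)^{p+1}\sum_{e\in E} \ell_e(f^*_e)\,f^*_e \;=\; (1+\epsilon)^{p+1} C_{\mathrm{opt}},
\]
where the second inequality is applied termwise in $m$. Under the hypothesis the two ends are equal, so the middle inequality is an equality, which after rearranging reads
\[
\sum_{e\in E}\;(1+\epsilon)\,f^*_e\sum_{m=0}^{p} b_{em}\,(f^*_e)^m\big((1+\epsilon)^p-(1+\epsilon)^m\big) \;=\; 0 .
\]
Every summand is nonnegative (as $b_{em}\ge 0$, $f^*_e\ge 0$, and $(1+\epsilon)^p \ge (1+\epsilon)^m$ for $m\le p$), so each summand vanishes.

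Now fix an edge $e$ with $f^*_e > 0$; note the statement is only meaningful for $\epsilon > 0$, which I would assume (and the collapse of the \emph{first} inequality in the chain additionally shows $(1+\epsilon)f^*$ is itself optimal for the larger instance, though this is not needed here). Dividing the vanishing $e$-term by $(1+\epsilon) f^*_e > 0$ leaves a sum of nonnegative terms $\sum_{m=0}^{p} b_{em}(f^*_e)^m((1+\epsilon)^p-(1+\epsilon)^m) = 0$, hence $b_{em}(f^*_e)^m((1+\epsilon)^p-(1+\epsilon)^m)=0$ for every $m$. For $m<p$ we have $(1+\epsilon)^p-(1+\epsilon)^m>0$ and $(f^*_e)^m>0$, forcing $b_{em}=0$. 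Therefore $\ell_e(x)=b_{ep}x^p$, i.e.\ $\ell_e$ is a monomial of degree $p$.

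The only delicate point — and the one I expect to be the main obstacle, albeit a minor one — is the degenerate case $b_{ep}=0$, i.e.\ an edge with identically zero latency that nonetheless carries positive optimal flow. I would dispose of it either by reading ``of order $p$'' so as to include the zero polynomial (consistent with the phrasing used in Lemma~\ref{lem:upper_equ} after the monic-monomial reduction), or by observing that an edge with $\ell_e\equiv 0$ contributes nothing to any cost and creates no congestion effect, so it may be assumed absent without loss of generality. Everything else is the same termwise-nonnegativity bookkeeping already used in Theorem~\ref{thm:opt_bound} and Lemma~\ref{lem:upper_opt}.
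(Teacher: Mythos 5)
Your proof is correct and follows essentially the same route as the paper's: re-run the upper-bound chain from Theorem~\ref{thm:opt_bound} with the feasible flow $(1+\epsilon)f^*$, force every inequality in the chain to hold with equality, and extract $b_{em}=0$ for $m<p$ on each edge with $f^*_e>0$ from the vanishing of a termwise-nonnegative sum. Your added care about $\epsilon>0$ and the degenerate $b_{ep}=0$ case goes slightly beyond what the paper records (the paper instead notes, as you do in passing, that equality in the first inequality gives $f^{'*}_e=(1+\epsilon)f^*_e$), but the argument is the same.
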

\begin{proof}
We let $f^{*}$ be the optimal flow for instance $(G, (d_i), \ell)$ and then $(1+\epsilon) f^*$ is feasible to instance $(G, ((1+\epsilon)d_i), \ell)$.
From (i), we have
\begin{align*}
	(1+\epsilon)^{p+1} C_{\mathrm{opt}} = C'_{\mathrm{opt}} 
		& \leq \sum_{e\in E} \ell_e( (1+\epsilon) f^*_e) (1+\epsilon) f^*_e \\
		& = \sum_{e\in E} \bigg( \sum_{m=0}^p {b}_{em} ( (1+\epsilon) f^*_e )^m \bigg) (1+\epsilon) f^*_e\\		
		& \leq \sum_{e\in E} \bigg(\sum_{m=0}^p {b}_{em} (1+\epsilon)^p (f^*_e)^m \bigg) (1+\epsilon) f^*_e\\				
    & = (1+\epsilon)^{p+1} \sum_{e\in E} \ell_e(f^*_e) f^*_e \\
    & = (1+\epsilon)^{p+1} C_{\mathrm{opt}},
\end{align*}
where all inequalities must hold as equalities. 
From the second inequality, we must have
\[
\sum_{e\in E} \bigg(\sum_{m=0}^p {b}_{em} \Big[ (1+\epsilon)^p (f^*_e)^m - ( (1+\epsilon) f^*_e )^m \Big] \bigg) (1+\epsilon) f^*_e =0.
\]
For any $e\in E$, if $f^*_e >0$, then we observe that $b_{em}=0$ for $m\neq p$, which implies that $\ell_e(\cdot)$ is a monomial of order $p$.
From the first inequality, we also obtain $f^{'*}_e = (1+\epsilon) f^*_e$ for all $e\in E$.
\end{proof}

\begin{lemma} \label{lem:lower_equ}
Let $f$ and $f'$ be equilibrium flows for instances $(G,(d_i),\ell)$ and $(G, ((1+\epsilon)d_i), \ell)$ with polynomial latency functions of degree at most $p$ with nonnegative coefficients, respectively.
Suppose $C(f') = (1+\epsilon) C(f)$. 
Then, on all edge $e\in E$ with $f'_e\neq f_e$, the latency functions $\ell_e(\cdot)$ is constant.
\end{lemma}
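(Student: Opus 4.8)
The plan is to run Dafermos's argument for Theorem~\ref{thm:dafermos} once more, keeping careful track of where each inequality is spent, and then insert the hypothesis $C(f')=(1+\epsilon)C(f)$. The starting point is the path characterization of Wardrop equilibria: since $f_P>0\implies\ell_P(f)=\mu_i(f)$ while $\ell_P(f)\ge\mu_i(f)$ always, one gets $C(f)=\sum_{e\in E}\ell_e(f_e)f_e=\sum_{i\in[k]}\mu_i(f)d_i$ and likewise $C(f')=\sum_{i\in[k]}\mu_i(f')d'_i$. Moreover, because $f'$ is a feasible path flow, $\sum_{e\in E}\ell_e(f_e)f'_e=\sum_{i\in[k]}\sum_{P\in\Pc_i}\ell_P(f)f'_P\ge\sum_{i\in[k]}\mu_i(f)d'_i$, and symmetrically $\sum_{e\in E}\ell_e(f'_e)f_e\ge\sum_{i\in[k]}\mu_i(f')d_i$.

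Expanding $\sum_{i\in[k]}(\mu_i(f')-\mu_i(f))(d'_i-d_i)$ into its four terms and substituting these two equalities and two inequalities yields the single chain
\[
\sum_{i\in[k]}(\mu_i(f')-\mu_i(f))(d'_i-d_i)\ \ge\ \sum_{e\in E}\bigl(\ell_e(f'_e)-\ell_e(f_e)\bigr)(f'_e-f_e)\ \ge\ 0,
\]
where the final step is monotonicity of each $\ell_e$ applied edgewise. On the other hand, the computation already performed in the proof of Theorem~\ref{thm:equ_bound} shows $\sum_{i\in[k]}(\mu_i(f')-\mu_i(f))(d'_i-d_i)=\tfrac{\epsilon}{1+\epsilon}C(f')-\epsilon C(f)$, which vanishes exactly when $C(f')=(1+\epsilon)C(f)$. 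Under the hypothesis, then, the whole chain collapses to equalities; in particular $\sum_{e\in E}\bigl(\ell_e(f'_e)-\ell_e(f_e)\bigr)(f'_e-f_e)=0$.

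Since every summand is nonnegative, each must vanish, so $\bigl(\ell_e(f'_e)-\ell_e(f_e)\bigr)(f'_e-f_e)=0$ for all $e\in E$. For an edge with $f'_e\neq f_e$ this forces $\ell_e(f_e)=\ell_e(f'_e)$: the non-decreasing polynomial $\ell_e$ takes one value at two distinct arguments, hence is constant on the interval between them, and a polynomial constant on an interval is constant everywhere (equivalently, all its coefficients of positive degree vanish). This is the claim.

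I expect the main obstacle to be the bookkeeping in the first two steps: justifying the two ``cross'' inequalities $\sum_{e\in E}\ell_e(f_e)f'_e\ge\sum_{i\in[k]}\mu_i(f)d'_i$ and $\sum_{e\in E}\ell_e(f'_e)f_e\ge\sum_{i\in[k]}\mu_i(f')d_i$ cleanly from the edge--path incidence and the inequalities $\ell_P(\cdot)\ge\mu_i(\cdot)$, and getting every sign right so that the displayed chain is genuinely a chain. Once it is in place, the equality case and the final polynomial argument are routine. One should also dispose of the degenerate case $\epsilon=0$ separately, where $d'=d$ makes the left-hand side zero automatically and the conclusion follows without using the hypothesis.
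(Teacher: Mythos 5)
Your proof is correct and follows essentially the same route as the paper: the paper invokes Theorem~\ref{thm:dafermos} and says the equality $\sum_{e\in E}[\ell_e(f'_e)-\ell_e(f_e)](f'_e-f_e)=0$ follows ``by backtracking the proof'' in \citet{dafermos1984sensitivity}, and your four-term expansion with the two cross inequalities is precisely that backtracking, carried out explicitly. The remaining steps (termwise vanishing by monotonicity, and a non-decreasing polynomial agreeing at two distinct points being constant) match the paper's argument, with your version supplying slightly more detail on the polynomial step.
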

\begin{proof}
The bound $C(f') = (1+\epsilon) C(f)$ is based on Theorem \ref{thm:dafermos}, which is derived from the monotonicity of latency functions; that is, 
\[
	\sum_{e\in E} [\ell_e(f^1_e) - \ell_e(f^2_e)] (f^1_e - f^2_e) \geq 0
\]
for any $f^1, f^2 \geq 0$.
If $C(f') = (1+\epsilon) C(f)$ holds, then, by backtracking the proof of Theorem \ref{thm:dafermos}, given in \citet{dafermos1984sensitivity}, we can easily show that 
\[
	\sum_{e\in E} [\ell_e(f') - \ell_e(f)] (f'_e - f_e) = 0.
\]
Since $\ell_e(\cdot)$ is monotone, we must have $[\ell_e(f') - \ell_e(f)] (f'_e - f_e) = 0$ for each $e \in E$.
If $f'_e \neq f_e$, then we must have $\ell_e(f'_e) = \ell_e(f_e)$, which implies that $\ell_e(\cdot)$ is a constant function.
There must exist edges with $f'_e \neq f_e$; otherwise, either $f'$ or $f$ is infeasible.
\end{proof}

Lemma \ref{lem:lower_opt} states that the latency functions are of order $p$ on all paths used by $f^*$. 
Lemma \ref{lem:lower_equ} indicates that the latency functions are constant on edges with $f'_e \neq f_e$.
Therefore, if the lower bound in Theorem \ref{thm2} is obtained, then we must have $f'_e = f_e$ on any edge used by $f^{*}$.

\section*{Acknowledgments}
This research was partially supported by the National Science Foundation under grant CMMI-1558359.

\bibliographystyle{apalike}
\bibliography{sensitivity}

\begin{thebibliography}{}

\bibitem[Christodoulou et~al., 2011]{christodoulou2011performance}
Christodoulou, G., Koutsoupias, E., and Spirakis, P.~G. (2011).
\newblock On the performance of approximate equilibria in congestion games.
\newblock {\em Algorithmica}, 61(1):116--140.

\bibitem[Correa et~al., 2008]{correa2008geometric}
Correa, J.~R., Schulz, A.~S., and Stier-Moses, N.~E. (2008).
\newblock A geometric approach to the price of anarchy in nonatomic congestion
  games.
\newblock {\em Games and Economic Behavior}, 64(2):457--469.

\bibitem[Dafermos, 1980]{dafermos1980traffic}
Dafermos, S. (1980).
\newblock Traffic equilibrium and variational inequalities.
\newblock {\em Transportation Science}, 14(1):42--54.

\bibitem[Dafermos and Nagurney, 1984]{dafermos1984sensitivity}
Dafermos, S. and Nagurney, A. (1984).
\newblock Sensitivity analysis for the asymmetric network equilibrium problem.
\newblock {\em Mathematical Programming}, 28(2):174--184.

\bibitem[Englert et~al., 2010]{englert2010sensitivity}
Englert, M., Franke, T., and Olbrich, L. (2010).
\newblock Sensitivity of {Wardrop} equilibria.
\newblock {\em Theory of Computing Systems}, 47(1):3--14.

\bibitem[Koutsoupias and Papadimitriou, 1999]{koutsoupias1999worst}
Koutsoupias, E. and Papadimitriou, C. (1999).
\newblock Worst-case equilibria.
\newblock In {\em Annual Symposium on Theoretical Aspects of Computer Science},
  pages 404--413. Springer.

\bibitem[O'Hare et~al., 2016]{o2016mechanisms}
O'Hare, S.~J., Connors, R.~D., and Watling, D.~P. (2016).
\newblock Mechanisms that govern how the price of anarchy varies with travel
  demand.
\newblock {\em Transportation Research Part B: Methodological}, 84:55--80.

\bibitem[Papadimitriou, 2001]{papadimitriou2001algorithms}
Papadimitriou, C.~H. (2001).
\newblock Algorithms, games, and the internet.
\newblock In {\em International Colloquium on Automata, Languages, and
  Programming}, pages 1--3. Springer.

\bibitem[Rosenthal, 1973]{rosenthal1973class}
Rosenthal, R.~W. (1973).
\newblock A class of games possessing pure-strategy nash equilibria.
\newblock {\em International Journal of Game Theory}, 2(1):65--67.

\bibitem[Roughgarden, 2005]{roughgarden2005selfish}
Roughgarden, T. (2005).
\newblock {\em Selfish routing and the price of anarchy}, volume 174.
\newblock MIT press Cambridge.

\bibitem[Roughgarden and Tardos, 2002]{roughgarden2002bad}
Roughgarden, T. and Tardos, {\'E}. (2002).
\newblock How bad is selfish routing?
\newblock {\em Journal of the ACM (JACM)}, 49(2):236--259.

\bibitem[Schmeidler, 1973]{schmeidler1973equilibrium}
Schmeidler, D. (1973).
\newblock Equilibrium points of nonatomic games.
\newblock {\em Journal of Statistical Physics}, 7(4):295--300.

\bibitem[Smith, 1979]{smith1979existence}
Smith, M.~J. (1979).
\newblock The existence, uniqueness and stability of traffic equilibria.
\newblock {\em Transportation Research Part B: Methodological}, 13(4):295--304.

\bibitem[{Transportation Networks for Research Core Team}, 2019]{TNTP}
{Transportation Networks for Research Core Team} (2019).
\newblock Transportation networks for research.
\newblock \url{https://github.com/bstabler/TransportationNetworks}.
\newblock Accessed on June 28, 2019.

\bibitem[Wardrop, 1952]{wardrop1952some}
Wardrop, J.~G. (1952).
\newblock Some theoretical aspects of road traffic research.
\newblock In {\em Proc. of the Institute of Civil Engineerings}, volume Pt. II,
  pages 325--378.

\end{thebibliography}

\end{document}